\setlist{nolistsep}
\DeclareMathOperator*{\argmin}{arg\,min}
\newtheorem{definition}{Definition}[section]
\newtheorem{exam}[definition]{Example}
\newtheorem{proposition}{Proposition}[section]
\newtheorem{rem}{Remark}[section]
\newcommand{\R}{\mathbb{R}}
\renewcommand{\P}{\mathbb{P}}
\newcommand{\Q}{\mathbb{Q}}
\newcommand{\var}{\textrm{\rm var}}
\newcommand{\cov}{\textrm{\rm cov}}
\newcommand{\levy}{{L\'evy}}
\newcommand{\Tau}{{\mathcal{T}}}
\newcommand{\disteq}{\stackrel{\mathrm{d}}{=}}
\newcommand{\subTS}{\textup{\rm subTS}}
\newcommand{\NTS}{\textup{\rm NTS}}
\newcommand{\gNTS}{\textup{\rm gNTS}}
\newcommand{\gStdNTS}{\textup{\rm gStdNTS}}
\newcommand{\stdNTS}{\textup{\rm stdNTS}}
\newcommand{\tr}{{\texttt{T}}}
\newcommand{\diag}{\textrm{\rm diag}}
\newcommand{\sqt}{{\diamond \frac{1}{2}}}
\newcommand\blfootnote[1]{%
  \begingroup
  \renewcommand\thefootnote{}\footnote{#1}%
  \addtocounter{footnote}{-1}%
  \endgroup
}
\title{Quanto Option Pricing on a Multivariate \levy~Process Model with a Generative Artificial Intelligence}
\author{
Young Shin Kim\footnote{College of Business, Stony Brook University, New York, USA (aaron.kim@stonybrook.edu)}
 \and 
 Hyun-Gyoon Kim\footnote{Department of Financial Engineering, School of Business, Ajou University, Korea (hyungyoonkim@ajou.ac.kr).} 
}
\providecommand{\keywords}[1]{\textbf{\textit{Key words:}} #1}
\begin{document}
%\doublespacing
\maketitle

\begin{abstract}
In this study, we discuss a machine learning technique to price exotic options with two underlying assets based on a non-Gaussian \levy~process model.  We introduce a new multivariate \levy~process model named the generalized normal tempered stable (\gNTS) process,  which is defined by time-changed multivariate Brownian motion. Since the \gNTS~process does not provide a simple analytic formula for the probability density function (PDF), we use the conditional real-valued non-volume preserving (CRealNVP) model, which is a type of flow-based generative network. Then, we discuss the no-arbitrage pricing on the \gNTS~model for pricing the quanto option whose underlying assets consist of a foreign index and foreign exchange rate. We present the training of the CRealNVP model to learn the PDF of the \gNTS~process using a training set generated by Monte Carlo simulation.  Next, we estimate the parameters of the \gNTS~model with the trained CRealNVP model using the empirical data observed in the market.  Finally, we provide a method to find an equivalent martingale measure on the \gNTS~model and to price the quanto option using the CRealNVP model with the risk-neutral parameters of the \gNTS~model.
\blfootnote{\textbf{Acknowledgements:} Y.S. Kim gratefully acknowledges the support of Mr.  Park president of Juro Instruments Co., Ltd., Korea.}

\keywords{
Quanto Option, 
Generalized Normal Tempered Stable Process,
 Generative Artificial Intelligence,
  flow-based generative network,
   real-valued non-volume preserving (RealNVP) model,
   conditional RealNVP model
}%
\end{abstract}

\baselineskip=24pt

\doublespacing
\section{Introduction}
A standard quanto option is a European option underlying a foreign asset, whose payoff is converted to another currency at a predefined fixed exchange rate. Since the quanto option provides foreign–asset exposure without taking the corresponding exchange rate risk, the tail dependence between the asset and the exchange rate is instrumental in the valuation. Quanto option pricing based on the Black–Scholes model (\citeauthor{BlackScholes:1973},\citeyear{BlackScholes:1973}), assuming a multivariate Brownian motion, has been studied by \cite{BaxterRennie:1996}. Recently, \cite{KIM2015512} presented the quanto option pricing based on the multivariate normal tempered stable (NTS) process which is a kind of non-Gaussian \levy~process. This approach is more efficient than the Gaussian approach since the NTS process can capture the fat-tails and asymmetric dependence between the asset and the exchange rate, which are empirically observed in the market. The NTS process model is more realistic than the multivariate Brownian motion, but it still has restrictions. 
The NTS process is defined by taking multivariate Brownian motion and substituting the Tempered Stable Subordinator with the time variable.
In this definition, only one subordinator is applied to different elements of the multivariate Brownian motion.
Since the subordinator is related to the time-varying volatility of the market, the NTS model supposes that only one market volatility affects various assets in the market.
However, the volatility characteristics of a foreign asset and of a foreign exchange rate are different, and the single subordinator setting of the NTS model cannot explain this difference, and hence it is not realistic to model the quanto option pricing.

In this research, we provide a generalized NTS (gNTS) process which is defined by a mixture of multiple subordinators to multivariate Brownian motion. This enhanced process not only captures fat-tails and asymmetric dependence of multi-dimensional asset returns but also describes the different volatility characteristics of a foreign asset and exchange rate. As a consequence, we are allowed to obtain a more flexible quanto option pricing model by the \gNTS~process. Moreover, the \gNTS~model allows us to find risk-neutral measures using Sato's change of measure in \levy~process model (\citeauthor{Sato:1999}, \citeyear{Sato:1999}) and Girsanov's theorem. We find option prices under the \gNTS~model based on the risk-neutral parameters for the risk-neutral measure equivalent to the physical market measure fitted to the empirical data.

Since the probability density function (PDF) of the gNTS process is not given by a simple analytic form, we need to have an efficient numerical method to apply the model to derivative pricing such as Quanto options. The Monte Carlo method can be a good alternative, but the simulation takes a long time and is not easy to obtain sensitivity, such as the Greek Letters of the option. In this paper, we suggest an extension of the real-valued non-volume preserving (RealNVP) model to obtain the PDF of the \gNTS~process.
First, we demonstrate flow-based generative networks based on the RealNVP designed by \cite{Dinh_et_al:2016}. As other generative models including Generative Adversarial Network (\citeauthor{Goodfellow_et_al:2014}, \citeyear{Goodfellow_et_al:2014}) and Variational Autoencoder (\citeauthor{Kingma_el_al:2013}, \citeyear{Kingma_el_al:2013}), this generative model can learn the probability density inherent in data and generate new data samples that resemble the original data. Furthermore, only flow-based generative models are able to provide the density functions in explicit form while other generative networks cannot. Since the original form of the RealNVP model is nonparametric, it has difficulty in the arbitrage option pricing theory, which needs to find the risk-neutral measure. To overcome this drawback, we use the Conditional RealNVP (CRealNVP) model by \cite{KimKwonKimHuh:2022}. The CRealNVP allows model parameters of a given parametric distribution as input variables. In the option pricing with the \gNTS~model, we will find a set of risk-neural parameters of the risk-neutral measure of the physical market measure. The CRealNVP can be applied to find the PDF of the \gNTS~process, to estimate the parameters of the \gNTS~market model, and to calculate the Quanto option pricing under the \gNTS~model with the risk-neutral parameters.

The remainder of this paper is organized as follows. The review of the NTS process is presented in Section 2.
Section 3 proposes how we construct the gNTS process and standard gNTS process. Section 4 presents the 2-dimensional gNTS model for an underlying asset return and a foreign exchange rate return, and discusses the change of measures between the physical and risk-neutral measures on the model. 
In section 5, we demonstrate the CRealNVP model: definition of the model, training the CRealNVP model for the gNTS model with a training set generated by Monte-Carlo simulation, and \gNTS~model parameter estimation using the historical data through the trained CRealNVP model. In addition, we provide a method to select a set of risk-neural parameters using the estimated physical market parameters. A calculating method for the quanto option price using the CRealNVP model under the risk-neutral parameters of the \gNTS~model is also proposed in this section. Section 6 concludes followed by the proofs and mathematical details in the Appendix.

%\section{Tempered Stable Subordinator}
\section{NTS Processes}
Let $\alpha \in (0, 2)$, $\theta > 0$, and $c>0$. Assume \levy~measure $\nu$ equals to
\[
\nu(dx) = \frac{-ce^{-\theta x}}{\Gamma\left(-\frac{\alpha}{2}\right)x^{\alpha/2+1}} 1_{x>0}dx
\]
and let $\gamma = \int_0^1x\nu(dx)$. A pure jump \levy~process $\mathcal T = (\Tau(t))_{t \geq 0}$ defined by the \levy-Khintchine formula
\[
\phi_{\Tau(t)}(u)=\exp\left(i\gamma u t+t\int_{-\infty}^\infty(e^{iux}-1-iux1_{|x|\le1})\nu(dx)\right)
\]
is referred to as the \textit{tempered stable subordinator} with parameters $(\alpha$, $c$, $\theta)$ and denoted to $\Tau\sim \subTS(\alpha, c, \theta)$. 
The characteristic function $\phi_{\Tau(t)}(u)$ of the  \levy-Khintchine formula is simplified to
\begin{equation}\label{eq:chfTSSub}
\phi_{\Tau(t)}(u)=\exp\left(-ct\left(\left(\theta-iu\right)^{\frac{\alpha}{2}}-\theta^{\frac{\alpha}{2}}\right)\right).
\end{equation}

By applying Sato's change of measure theorem (\citeauthor{Sato:1999},\citeyear{Sato:1999}), we can prove the following proposition\footnote{See \cite{KimLee:2007} and \cite{Kim:2005} for details.}.
\begin{proposition}\label{pro:Change of Measure Tau}
Consider a measure $\P$ and assume that $\Tau\sim \subTS(\alpha, c, \theta)$ under $\P$.
Then there is a measure $\Q_{\hat\theta}$ equivalent to $\P$ such that $\Tau\sim \subTS(\alpha, c, \hat\theta)$ under $\Q_{\hat\theta}$ for $\hat\theta>0$.
\end{proposition}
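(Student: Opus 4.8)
The plan is to obtain the statement directly from Sato's equivalence criterion for \levy~processes (Theorems 33.1 and 33.2 in \citeauthor{Sato:1999}, \citeyear{Sato:1999}). Under $\P$ the subordinator $\Tau$ is the \levy~process with generating triplet $(0,\nu,\gamma)$, where $\nu$ is the given \levy~measure and $\gamma=\int_0^1 x\,\nu(dx)$; let $(0,\hat\nu,\hat\gamma)$ denote the triplet of a $\subTS(\alpha,c,\hat\theta)$ subordinator, so $\hat\nu(dx)=\frac{-ce^{-\hat\theta x}}{\Gamma(-\alpha/2)x^{\alpha/2+1}}1_{x>0}\,dx$ and $\hat\gamma=\int_0^1 x\,\hat\nu(dx)$. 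Both $\gamma$ and $\hat\gamma$ are finite because $x\,\nu(dx)$ and $x\,\hat\nu(dx)$ are $O(x^{-\alpha/2})$ near the origin and $\alpha<2$. Sato's theorem says that, restricted to any finite horizon $\mathcal F_T$, the path law of a $\subTS(\alpha,c,\theta)$ subordinator and that of a $\subTS(\alpha,c,\hat\theta)$ subordinator are mutually absolutely continuous if and only if (i) their Gaussian parts coincide, (ii) $\nu\sim\hat\nu$ and $\int_0^\infty\big(\sqrt{d\hat\nu/d\nu(x)}-1\big)^2\nu(dx)<\infty$, and (iii) the drift identity $\hat\gamma-\gamma=\int_{|x|\le1}x\,(\hat\nu-\nu)(dx)$ holds; in that case the density process $Z_t=d\Q_{\hat\theta}/d\P|_{\mathcal F_t}$ is a strictly positive, mean-one $\P$-martingale of exponential-\levy~form.

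Conditions (i) and (iii) are immediate, so I would dispose of them first: both processes are pure-jump subordinators, hence their Gaussian parts vanish; and since $\nu,\hat\nu$ are carried by $(0,\infty)$, the right-hand side of (iii) equals $\int_0^1 x\,\hat\nu(dx)-\int_0^1 x\,\nu(dx)=\hat\gamma-\gamma$ by the very definition of the drifts. The real content is (ii). On $(0,\infty)$ one has $\frac{d\hat\nu}{d\nu}(x)=e^{(\theta-\hat\theta)x}>0$, so $\nu$ and $\hat\nu$ are equivalent and $\big(\sqrt{d\hat\nu/d\nu(x)}-1\big)^2=\big(e^{(\theta-\hat\theta)x/2}-1\big)^2$. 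I would then split the integral at, say, $x=1$: as $x\downarrow0$ this factor is $\sim\frac14(\theta-\hat\theta)^2x^2$, so against $\nu(dx)\asymp x^{-\alpha/2-1}\,dx$ the integrand is $\asymp x^{1-\alpha/2}$, which is integrable near $0$ precisely because $\alpha\in(0,2)$; as $x\to\infty$ the factor $e^{-\theta x}$ inside $\nu(dx)$ dominates every term of $\big(e^{(\theta-\hat\theta)x/2}-1\big)^2e^{-\theta x}$ (the slowest-decaying being $e^{(\theta-\hat\theta)x}e^{-\theta x}=e^{-\hat\theta x}$), so the tail is exponentially small for every $\theta>0$ and $\hat\theta>0$. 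Hence the integral in (ii) converges, Sato's criterion applies, and it yields a measure $\Q_{\hat\theta}\sim\P$ under which $\Tau\sim\subTS(\alpha,c,\hat\theta)$.

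To exhibit the change of measure concretely, I would identify $Z_t$ as the Esscher density $Z_t=\exp\!\big((\theta-\hat\theta)\Tau(t)+ct(\hat\theta^{\alpha/2}-\theta^{\alpha/2})\big)$: by \eqref{eq:chfTSSub} one gets $\mathbb E_\P\big[e^{(\theta-\hat\theta)\Tau(t)}\big]=e^{-ct(\hat\theta^{\alpha/2}-\theta^{\alpha/2})}$, which is finite for every $\hat\theta>0$ since $\theta-\hat\theta<\theta$, so $Z_t$ is a positive mean-one $\P$-martingale; and a short characteristic-function computation gives $\mathbb E_{\Q_{\hat\theta}}\big[e^{iu\Tau(t)}\big]=\mathbb E_\P\big[Z_t e^{iu\Tau(t)}\big]=\exp\!\big(-ct((\hat\theta-iu)^{\alpha/2}-\hat\theta^{\alpha/2})\big)$, which is exactly \eqref{eq:chfTSSub} with $\theta$ replaced by $\hat\theta$; since the Esscher transform preserves the independence and stationarity of increments, this confirms $\Tau\sim\subTS(\alpha,c,\hat\theta)$ under $\Q_{\hat\theta}$. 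I expect the only genuine obstacle to be the integrability check in (ii)---in particular, controlling the $x^{-\alpha/2-1}$ singularity of $\nu$ at the origin against $\big(\sqrt{d\hat\nu/d\nu}-1\big)^2$, which is where the hypothesis $\alpha<2$ is used---together with the routine bookkeeping that equivalence is obtained on each $\mathcal F_T$, which is all that the subsequent no-arbitrage arguments require.
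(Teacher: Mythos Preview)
Your proposal is correct and follows precisely the route the paper indicates: the paper does not spell out a proof but simply states that the result follows from Sato's change-of-measure theorem (citing \citeauthor{Sato:1999}, \citeyear{Sato:1999}, and referring to \cite{KimLee:2007} and \cite{Kim:2005} for details), and you have carried out exactly that verification of Sato's criterion, together with the explicit Esscher density. There is nothing to add---your integrability check for condition (ii) and the characteristic-function computation under $\Q_{\hat\theta}$ are both sound.
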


%\section{NTS Processes}
Let $N$ be a positive integer, and $\R_+ = \{x\in\R| x>0\}$ be the set of positive real numbers.
Consider a subordinator $\Tau=(\Tau(t))_{t \geq 0}$ $\sim$ $\subTS(\alpha$, $c$, $\theta)$ where $\alpha \in (0,2)$, $\theta > 0$, and $c = \frac{2\theta^{1-\alpha/2}}{\alpha}$.
%Let $\mu = \left(\mu_1, \mu_2, ..., \mu_N\right)^\tr\in\R^N$, $\beta = \left(\beta_1, \beta_2, ..., \beta_N\right)^\tr\in\R^N$, and $\sigma = \left(\sigma_1, \sigma_2, ..., \sigma_N\right)^\tr\in\R_+^N$.  
Let $\mu \in\R^N$, $\beta \in\R^N$, and $\sigma \in\R_+^N$, where $\mu_n$, $\beta_n$ and $\sigma_n$ are considered as the $n$-th elements of $\mu$, $\beta$, and $\sigma$, respectively.  
Let $R = \left[\rho_{k,n}\right]_{k, n \in \{1, 2, ..., N\}}$ be a dispersion matrix with $\rho_{n,n}=1$ and $ R^{1/2}$ given by factorization $R=R^{1/2}(R^{1/2})^\tr$, such as a Cholesky factorization. 
Assume that $B = (B(t))_{t \geq 0}$ %with $B(t) = \left(B_1(t), ..., B_N(t)\right)^\tr$
 is an independent $N$-dimensional Brownian motion and $B$ is independent of $\Tau$. 
The $N$-dimensional process $X = (X(t))_{t \geq 0}$, defined by %$X(t) = \left(X_1(t), ..., X_N(t)\right)^\tr$, 
\[
	X(t) = \mu t + \beta \Tau(t)+ \textup{diag}(\sigma) R^{1/2}B(\Tau(t))
\]
is called an $N$-dimensional NTS process and denoted by\footnote{The parameters are as follows: $\alpha\in(0,2]$ determines fat-tailedness and peakedness (with smaller values implying fatter tails and higher peaks) as well as the jump intensity, implying infinite variation for $\alpha\in[1,2)$ and finite variation for $\alpha\in(0,1)$; $\theta$ is the tempering and scaling parameter for the subordinator; $\mu$ reflects the drift of the NTS process; $\beta$ and $\sigma$ are the skewness and scale parameters, respectively; and $R$ determines the dependence structure.}
\[
	X \sim \NTS_N\left(\alpha, \theta, \beta, \mu, \sigma, R\right).
\]
The characteristic function of $X_n(t)$, the $n$-th element of $X(t)$, is
\[
	\phi_{X_n(t)}(u) = \exp\left(i \mu_n u t - \frac{2t\theta^{1-\frac{\alpha}{2}}}{\alpha}\left(\left(\theta - \beta_n i u + \frac{\sigma_n^2 u^2}2\right)^{\frac\alpha 2} - \theta^{\frac\alpha 2}\right) \right), 
\]
for $n\in\{1,2,\cdots, N\}$.
The expectation and the covariance are $E[X_n(t)] = \left(\mu_n+\beta_n\right)t$ and
\begin{equation}\label{cov}
 \cov(X_k(t), X_n(t)) = \sigma_k\sigma_n \rho_{k, n} t + \beta_k \beta_n t\left(\frac{2 - \alpha}{2 \theta}\right), 
\end{equation}
respectively, for $k,n\in\{1,2,\cdots, N\}$.

If we set $\mu=-\beta$ and $\sigma_n = \sqrt{1-\beta_n^2\left(\frac{2-\alpha}{2\theta}\right)}$ with $|\beta_n|<\sqrt{\frac{2\theta}{2-\alpha}}$ for $n\in\{1,2,\cdots, N\}$, then $X_0\sim \textup{NTS}(\alpha$, $\theta$, $\beta$, $\sigma$, $\mu$, $R)$ has $E[X_0(t)]=0$ and $\var(X_0(t))=t(1,1,\cdots, 1)^\tr$. In this case, we say that $X_0$ is the \textit{standard NTS process} and denote $X_0\sim \textup{stdNTS}(\alpha$, $\theta$, $\beta$, $R)$. A process new $X=(X(t))_{t\ge0}$ defined as $X(t)=mt+\diag(s)X_0(t)$ for $m\in\R^N$ and $s\in\R_+^N$ becomes
\[
X\sim \NTS_N(\alpha, \theta, \diag(s)\beta, m-\beta, \sigma, R)
\]
where $\sigma=(\sigma_1, \sigma_2, \cdots, \sigma_N)^\tr\in\R_+^N$ of which the $n$-th element is $\sigma_n=s_n\sqrt{1-\beta_n^2\left(\frac{2-\alpha}{2\theta}\right)}$ and $s_n$ is the the $n$-th element of $s$.
More details of the multivariate NTS distribution and process can be found in literature including \cite{kim2023multi}, \cite{Kim:2022}, \cite{KurosakiKim:2018}, \cite{Anad_et_al:2016}, and \cite{KimVolkmann:2013}. \cite{KIM2015512} presented the quanto option pricing under the 2-dimensional NTS process model.

\section{Generalized NTS Processes} \label{sec:NdimensionalNTSProcess}
Let $N$ be a positive integer, and $I_2=(0,2)$ be an open interval between 0 and 2. 
We consider an $N$-dimensional vectors $\alpha \in I_2^N$,  $\theta \in \R_+^N$, $\beta\in\R^N$, and $\sigma\in \R_+^N$,
where $\alpha_n$, $\theta_n$, $\beta_n$ and $\sigma_n$ are the $n$-th elements of $\alpha$, $\theta$, $\beta$, and $\sigma$, respectively.
Let $R = \left[\rho_{k,n}\right]_{k, n \in \{1, 2, ..., N\}}$ be a dispersion matrix with $\rho_{n,n}=1$.
%such that $\rho_{n, n} = 1$ for all $n \in \{1, 2, ..., N\}$
%and $ R^{\frac{1}{2}}$ given by factorization $R=R^{\frac{1}{2}}(R^{\frac{1}{2}})^\tr$, such as a Cholesky factorization. 
Let $\Tau = (\Tau(t))_{t \geq 0}$ be a $N$-dimensional independent tempered stable subordinator with $\Tau(t) = (\Tau_1(t), \Tau_2(t), \cdots, \Tau_N(t))^\tr$ and $(\Tau_n(t))_{t\ge 0}\sim \subTS(\alpha_n, 1, \theta_n)$ for $n\in\{1,2\cdots, N\}$\footnote{To simplify the model, we set $c=1$ in the tempered stable subordinator.}.
Let $B = (B(t))_{t \geq 0}$ be an independent $N$-dimensional Brownian motion % with $B(t) = \left(B_1(t), ..., B_N(t)\right)^\tr$. 
and assume $B$ and $\Tau$ are all mutually independent.
Suppose there is a $N$-dimensional process $(\tau(t))_{t\ge0}$ with $\tau(t) = (\tau_1(t),\tau_2(t), \cdots, \tau_N(t))^\tr$ satisfying $\Tau_n(t)=\int_0^t\tau_n(u)du$, for all $t\ge0$ and for $n\in\{1,2,\cdots, N\}$. Let $\tau^\sqt(t)=\left(\sqrt{\tau_1(t)},\sqrt{\tau_2(t)}, \cdots, \sqrt{\tau_N(t)}\right)^\tr$.
The $N$-dimensional process $X = (X(t))_{t \geq 0}$ defined by
\[
X(t) = \mu t+\diag\left(\beta\right) \int_0^t\tau(u)du +  \textup{diag}(\sigma)  \int_0^t \diag\left(\tau^\sqt(t)\right)R^{\frac{1}{2}} dB(u) 
\]
is called an $N$-dimensional generalized NTS process and is denoted by
\[
	X \sim \gNTS_N\left(\alpha, \theta, \beta, \mu, \sigma, R\right).
\]

Let $B_n^0(t)$ be the $n$-th element of $R^{\frac{1}{2}}B(t)$ for $t\ge 0$. Then the process $B_n^0 = (B_n^0(t))_{t\ge0}$ is a Brownian motion 
and $X_n(t)$, the $n$-th element of $X(t)$, is given by 
\begin{align}
\nonumber
X_n(t) &= \mu_n t + \beta_n \int_0^t \tau_n(u) du + \sigma_n \int_0^t \sqrt{\tau_n(u)}dB_n^0(u) \\
\label{eq:timechangedBM}
&= \mu_n t + \beta_n \Tau_n(t) + \sigma_nB_n^0(\Tau_n(t)).
\end{align}
Note that, we have $dB_k^0(t)\cdot dB_n^0(t) = \rho_{k,n}\,dt$.

\begin{proposition}\label{pro:GirsanovGNTS}
Suppose $X \sim \gNTS_N\left(\alpha, \theta, \beta, \mu, \sigma, R\right)$ under measure $\P$. 
%=(\hat\theta_1,\hat\theta_2,\cdots, \hat\theta_N)^\tr a=(\hat\beta_1,\hat\beta_2,\cdots, \hat\beta_N)^\tr
Then there is an equivalent measure $\Q_{\hat\theta, \hat\beta}$ for $\hat\theta\in\R^+$ and $\hat\beta\in\R$, and 
\[
X \sim \gNTS_N\left(\alpha, \hat\theta, \hat\beta, \mu, \sigma, R\right)
\]
under the measure $\Q_{\hat\theta, \hat\beta}$.
\end{proposition}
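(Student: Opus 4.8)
\emph{Proof idea.} The plan is to construct $\Q_{\hat\theta,\hat\beta}$ as a composition of two changes of measure acting on the two independent ingredients of the model. By \eqref{eq:timechangedBM} each coordinate of $X$ reads $X_n(t)=\mu_n t+\beta_n\Tau_n(t)+\sigma_n B_n^0(\Tau_n(t))$, where $B^0=R^{1/2}B$ is a Brownian motion with $dB_k^0(t)\cdot dB_n^0(t)=\rho_{k,n}\,dt$ and $B$ is independent of $\Tau$ under $\P$. I would first apply a Sato-type change of measure to move the tempering parameters $\theta_n$ of the subordinators $\Tau_n$ to $\hat\theta_n$, and then apply a Girsanov change to the driving Brownian motion $B$ to move the skewness vector $\beta$ to $\hat\beta$. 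Everything is done on a filtration $(\mathcal F_t)_{t\ge0}$ rich enough that $X$ is adapted and that carries both $\Tau$ and $B$.

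\emph{Step 1 (subordinators).} For each $n$ I would invoke Proposition~\ref{pro:Change of Measure Tau} for $\Tau_n\sim\subTS(\alpha_n,1,\theta_n)$: it supplies a strictly positive martingale $Z^{(1,n)}$, measurable with respect to the history of $\Tau_n$, with $E_\P[Z_t^{(1,n)}]=1$, under which $\Tau_n\sim\subTS(\alpha_n,1,\hat\theta_n)$. Since $\Tau_1,\dots,\Tau_N$ are independent and $B\perp\Tau$, the product $Z_t^{(1)}=\prod_{n=1}^N Z_t^{(1,n)}$ is again a positive martingale with expectation $1$ and is a function of $\Tau$ only. Under $\Q^{(1)}:=Z^{(1)}\cdot\P$ the $\Tau_n$ stay independent with $\Tau_n\sim\subTS(\alpha_n,1,\hat\theta_n)$ and $B$ is still an $N$-dimensional Brownian motion independent of $\Tau$, so $X\sim\gNTS_N(\alpha,\hat\theta,\beta,\mu,\sigma,R)$ under $\Q^{(1)}$.

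\emph{Step 2 (skewness, and assembly).} Set $\nu_n=(\hat\beta_n-\beta_n)/\sigma_n$, $\nu=(\nu_1,\dots,\nu_N)^\tr$, and $\eta=(R^{1/2})^{-1}\nu$, using that $R$, hence $R^{1/2}$, is invertible (as is implicit in taking a Cholesky factor). Let $Z_t^{(2)}=\exp\!\big(\eta^\tr B(t)-\tfrac12\,\eta^\tr\eta\,t\big)$, a true martingale because $\eta$ is constant, and a function of $B$ only. By Girsanov's theorem $\tilde B(t):=B(t)-\eta t$ is an $N$-dimensional Brownian motion under $Z^{(2)}\cdot\P$; writing $\tilde B^0:=R^{1/2}\tilde B$, the pathwise identity $B_n^0(s)=\tilde B_n^0(s)+\nu_n s$, read at the a.s.\ finite random time $s=\Tau_n(t)$, yields
\[
X_n(t)=\mu_n t+(\beta_n+\sigma_n\nu_n)\Tau_n(t)+\sigma_n\tilde B_n^0(\Tau_n(t))=\mu_n t+\hat\beta_n\Tau_n(t)+\sigma_n\tilde B_n^0(\Tau_n(t)).
\]
Now define $\Q_{\hat\theta,\hat\beta}$ by $\frac{d\Q_{\hat\theta,\hat\beta}}{d\P}\big|_{\mathcal F_t}=Z_t^{(1)}Z_t^{(2)}$. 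As $Z^{(1)}$ depends only on $\Tau$, $Z^{(2)}$ only on $B$, and $B\perp\Tau$ under $\P$, this product has $\P$-expectation $1$ and is a positive martingale, so $\Q_{\hat\theta,\hat\beta}\sim\P$; the same product structure shows that under $\Q_{\hat\theta,\hat\beta}$ the subordinators keep the law from Step~1, $\tilde B$ is a Brownian motion independent of $\Tau$, and $dB_k^0(t)\cdot dB_n^0(t)=\rho_{k,n}\,dt$ still holds (since $\tilde B$ has the same covariance as $B$). The displayed representation of $X_n$ is then exactly that of a $\gNTS_N(\alpha,\hat\theta,\hat\beta,\mu,\sigma,R)$ process, which is the claim.

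\emph{Main obstacle.} The subtle point is Step~2: $B$ enters $X$ only through the random time-changes $\Tau_n(\cdot)$, so one must check that a constant drift shift of $B$ in calendar time produces precisely the shift $\sigma_n\nu_n\Tau_n(t)$ of the $\beta_n\Tau_n(t)$ term; this works because $\Tau_n(t)$ is an a.s.\ finite, $B$-independent value of the time parameter and the identity $B_n^0(s)=\tilde B_n^0(s)+\nu_n s$ holds for every $s\ge0$. One also has to set up the filtration so that $X$ is adapted while $Z^{(1)}$ and $Z^{(2)}$ keep their $\Tau$- and $B$-measurability — this is what makes the product-of-independent-densities argument legitimate — and to keep in mind the two structural limitations that the statement already respects: $R$ must be invertible for $\eta$ to exist, and the tail indices $\alpha$ cannot be altered by an equivalent change of measure.
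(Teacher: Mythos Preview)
There is a genuine gap in Step~2. You treat \eqref{eq:timechangedBM} as a pathwise identity so that the calendar-time relation $B_n^0(s)=\tilde B_n^0(s)+\nu_n s$ can simply be read at $s=\Tau_n(t)$. But \eqref{eq:timechangedBM} is only a coordinate-wise equality in law; the actual definition of $X$ is the stochastic-integral form, in which the calendar-time Brownian motion enters through $\int_0^t\sqrt{\tau_n(u)}\,dB_n^0(u)$, not through $B_n^0(\Tau_n(t))$. If you apply your constant-drift Girsanov $dB_n^0(u)=d\tilde B_n^0(u)+\nu_n\,du$ inside that integral, the extra drift you pick up in $X_n$ is
\[
\sigma_n\nu_n\int_0^t\sqrt{\tau_n(u)}\,du,
\]
and not $\sigma_n\nu_n\,\Tau_n(t)=\sigma_n\nu_n\int_0^t\tau_n(u)\,du$. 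Since $\int_0^t\sqrt{\tau_n}\,du\neq\int_0^t\tau_n\,du$ in general, a constant $\eta$ does \emph{not} produce the shift $\beta_n\to\hat\beta_n$. (Equivalently: in the integral representation there is no single ``operational time'' at which to evaluate $Z^{(2)}$, because each coordinate carries its own subordinator $\Tau_n$.)

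The paper's proof repairs exactly this point by using a random, $\tau$-dependent Girsanov kernel: it solves $\diag(\sigma)R^{1/2}H(t)=\diag\!\big(\tau^{1/2}(t)\big)(\beta-\hat\beta)$ and sets $dW=dB+H\,du$. Then the drift contributed to the $n$-th coordinate is $\sigma_n\int_0^t\sqrt{\tau_n(u)}\,(R^{1/2}H(u))_n\,du=(\beta_n-\hat\beta_n)\int_0^t\tau_n(u)\,du$, which is precisely the $\Tau_n(t)$-term needed. After this Girsanov step the paper changes the $\theta_n$'s one at a time via Proposition~\ref{pro:Change of Measure Tau}, just as in your Step~1. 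Your product-of-independent-densities argument and your Step~1 are fine; what has to change is replacing the constant $\eta$ by the $\tau$-dependent $H$.
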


Let $X\sim\gNTS_N(\alpha, \theta, \beta, \mu, \sigma, R)$. Then the characteristic function of $X_n(t)$, the $n$-th element of $X(t)$, is
\[
	\phi_{X_n(t)}(u) = \exp\left(\mu_n i u t - t\left(\left(\theta_n - \beta_n i u + \frac{\sigma_n^2 u^2}2\right)^{\frac{\alpha_n}{2}} - \theta_n^{\frac{\alpha_n}{2}}\right) \right),
\]
for $n\in\{1,2,\cdots, N\}$.
Using the first and second derivatives of $\phi_{X_n(t)}$, we obtain the expectation as $E[X_n(t)] = \left(\mu_n + \frac{1}{2}\alpha_n\beta_n \theta_n^{\frac{\alpha_n}{2}-1}\right)t$ and the variance as
\begin{equation}\label{eq:variance}
\var(X_n(t)) = \frac{\alpha_n\theta_n^{\frac{\alpha}{2}-1}}{2}\left(\left(\frac{2-\alpha_n}{2\theta_n}\right)\beta_n^2+\sigma_n^2\right)t,
\end{equation}
for $n\in\{1,2,\cdots, N\}$.
%Hence we have
%\[
%E[X(t)] = m t ~~~\text{ and } ~~~ \var(X(t))=v t
%\]
%where $m=(m_1, m_2, \cdots, m_N)^\tr\in\R^N$ with $m_n=\mu_n+\frac{1}{2}\alpha_n\theta_n^{\frac{\alpha}{2}-1}\beta_n$ and
%$v=(v_1, v_2, \cdots, v_N)^\tr\in\R_+^N$ with $v_n=\frac{1}{2}\alpha_n\theta_n^{\frac{\alpha}{2}-2}\left(\left(1-\frac{\alpha_n}{2}\right)\beta_n^2+\theta_n\sigma_n^2\right)$,
%for $n\in\{1,2,\cdots, N\}$. 

Suppose that
\[
-\frac{2\theta_n^{1-\frac{\alpha_n}{4}}}{\sqrt{\alpha_n(2-\alpha_n)}}<\beta_n<\frac{2\theta_n^{1-\frac{\alpha_n}{4}}}{\sqrt{\alpha_n(2-\alpha_n)}}
~~~\text{ for } ~~~ n\in\{1,2,\cdots, N\},
\]
and define two vectors $\mu_0\in\R^N$ and  $\sigma_0\in\R_+^N$ 
% = (\mu_{0,1}, \mu_{0,2}, \cdots, \mu_{0,N})^\tr  = (\sigma_{0,1}, \sigma_{0,2}, \cdots, \sigma_{0,N})^\tr
where the $n$-th elements of $\mu_0$ and $\sigma_0$ are given by 
\begin{equation}\label{eq:stdparam of mu and sigma}
\mu_{0,n}=-\frac{1}{2}\alpha_n \beta_n \theta_n^{\frac{\alpha_n}{2}-1}, \text{ and }
\sigma_{0,n}=\sqrt{\frac{2}{\alpha_n}\theta_n^{1-\frac{\alpha_n}{2}}-\frac{2-\alpha_n}{2\theta_n}\beta_n^2}
\end{equation}
for $n\in\{1,2,\cdots, N\}$, respectively.
Then a \gNTS~process $X_0\sim\gNTS_N(\alpha, \theta, \beta, \mu_0, \sigma_0, R)$ has properties
$E[X_0(t)] = (0,0,\cdots, 0)^\tr$ and $\var(X_0(t)) = t(1,1,\cdots, 1)^\tr$.
In this case, the process $X_0$ is referred to as the \textit{standard gNTS} process with parameters $(\alpha,\theta,\beta,R)$ and denoted as
\[
X_0\sim \gStdNTS_N(\alpha,\theta,\beta,R).
\]
Using the standard \gNTS~process, we obtain the following proposition without proof:
\begin{proposition}\label{prop:stdgnts to gnts}
(a) Suppose $X_0\sim \gStdNTS_N(\alpha,\theta,\beta,R)$. Then a new process $X=(X(t))_{t\ge 0}$ with 
\[
X(t) = m t + \diag(s) X_0(t)
\]
for $m\in\R^N$ and $s\in\R_+^N$ becomes 
\[
X\sim\gNTS_N(\alpha, \theta, \diag(s)\beta , \diag(s)\mu_0+m, \diag(s)\sigma_0, R),
\]
and $E[X(t)] = m t$ and $\var(X(t)) = s^2 t$.\\
(b) Conversely, suppose $X\sim\gNTS_N(\alpha, \theta, \beta, \mu, \sigma, R)$. Then $X$ can be represented by the standard \gNTS~process as
\[
X(t) = mt+\diag(s) X_0(t)
\]
with $X_0\sim \gStdNTS_N(\alpha,\theta,\bar\beta,R)$, 
where $m\in\R^N$, $s\in\R_+^N$, and $\bar\beta\in\R^N$ of which the $n$-th elements are
\begin{align*}
m_n =\mu_n+\frac{\alpha_n\beta_n}{2}\theta_n^{\frac{\alpha_n}{2}-1}
,
s_n = \sqrt{\frac{\alpha_n}{2}\theta_n^{\frac{\alpha_n}{2}-1}\left(\left(\frac{2-\alpha_n}{2\theta_n}\right)\beta_n^2+\sigma_n^2\right)}
,
\text{ and }
\bar\beta_n = \frac{\beta_n}{s_n},
\end{align*}
respectively. Here, $\alpha_n$, $\theta_n$, $\beta_n$, $\mu_n$, and $\sigma_n$ are the $n$-th elements of $\alpha$, $\theta$, $\beta$, $\mu$, and $\sigma$, respectively.
\end{proposition}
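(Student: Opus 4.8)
The plan is to use the time-changed Brownian-motion representation that \emph{defines} the \gNTS~process, together with the elementary identity $\diag(s)\diag(v)=\diag(\diag(s)v)$ for $s,v\in\R^N$ and the linearity of the Lebesgue and It\^o integrals; everything then reduces to bookkeeping of parameters and to the mean/variance formulas recorded just before \eqref{eq:variance}.

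For part (a) I would write $X_0$ in its defining form with parameters $(\alpha,\theta,\beta,\mu_0,\sigma_0,R)$,
\[
X_0(t)=\mu_0 t+\diag(\beta)\int_0^t\tau(u)\,du+\diag(\sigma_0)\int_0^t\diag\!\big(\tau^\sqt(u)\big)R^{1/2}\,dB(u),
\]
and apply the affine map $x\mapsto mt+\diag(s)x$. Since $\diag(s)$ passes through both integrals and $\diag(s)\diag(\beta)=\diag(\diag(s)\beta)$, $\diag(s)\diag(\sigma_0)=\diag(\diag(s)\sigma_0)$, the process $X(t)=mt+\diag(s)X_0(t)$ is again of the defining form, now with drift $m+\diag(s)\mu_0$, skewness $\diag(s)\beta$, scale $\diag(s)\sigma_0$, and \emph{unchanged} $\alpha,\theta,R$ — the last point because the transformation never touches the subordinator $\Tau$ nor the correlated Brownian motion $R^{1/2}B$. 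This is the asserted law. For the moments one can argue directly from $E[X_0(t)]=0$ and $\var(X_0(t))=t(1,\cdots,1)^\tr$, giving $E[X(t)]=mt$ and $\var(X_n(t))=s_n^2\var(X_{0,n}(t))=s_n^2 t$; alternatively, substitute the new parameters into the mean/variance formulas and check that the $\mu_0$-term cancels the $\tfrac12\alpha_n s_n\beta_n\theta_n^{\alpha_n/2-1}$ term by the definition \eqref{eq:stdparam of mu and sigma} of $\mu_0$, and that $\big(\tfrac{2-\alpha_n}{2\theta_n}\big)\beta_n^2+\sigma_{0,n}^2=\tfrac{2}{\alpha_n}\theta_n^{1-\alpha_n/2}$ collapses the variance to $s_n^2 t$.

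For part (b) I would run (a) in reverse: look for $m\in\R^N$, $s\in\R_+^N$ and $\bar\beta\in\R^N$ with $X(t)=mt+\diag(s)X_0(t)$ and $X_0\sim\gStdNTS_N(\alpha,\theta,\bar\beta,R)$. By part (a) the right-hand side is $\gNTS_N\big(\alpha,\theta,\diag(s)\bar\beta,\,m+\diag(s)\bar\mu_0,\,\diag(s)\bar\sigma_0,\,R\big)$, where $\bar\mu_0,\bar\sigma_0$ come from $\bar\beta$ via \eqref{eq:stdparam of mu and sigma}. Equating this with $\gNTS_N(\alpha,\theta,\beta,\mu,\sigma,R)$ componentwise gives $s_n\bar\beta_n=\beta_n$, $s_n\bar\sigma_{0,n}=\sigma_n$, and $m_n+s_n\bar\mu_{0,n}=\mu_n$. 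Squaring the middle equation and eliminating $\bar\beta_n$ with the first,
\[
s_n^2\Big(\tfrac{2}{\alpha_n}\theta_n^{1-\alpha_n/2}-\tfrac{2-\alpha_n}{2\theta_n}\bar\beta_n^2\Big)=\sigma_n^2
\ \Longrightarrow\
s_n^2=\tfrac{\alpha_n}{2}\theta_n^{\alpha_n/2-1}\Big(\tfrac{2-\alpha_n}{2\theta_n}\beta_n^2+\sigma_n^2\Big),
\]
which is the stated $s_n$ (real and positive, the bracket being a sum of positive terms); then $\bar\beta_n=\beta_n/s_n$, and finally $m_n=\mu_n-s_n\bar\mu_{0,n}=\mu_n+\tfrac12\alpha_n\beta_n\theta_n^{\alpha_n/2-1}$ using $s_n\bar\beta_n=\beta_n$.

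The only point beyond pure bookkeeping is to confirm that the $\bar\beta$ built in (b) is admissible, i.e.\ satisfies $|\bar\beta_n|<2\theta_n^{1-\alpha_n/4}/\sqrt{\alpha_n(2-\alpha_n)}$, which is what makes $\gStdNTS_N(\alpha,\theta,\bar\beta,R)$ well defined. This is automatic: that bound is exactly the condition $\bar\sigma_{0,n}^2>0$, and here $\bar\sigma_{0,n}^2=\sigma_n^2/s_n^2>0$ by construction. Hence there is no real obstacle — the proposition follows by a direct computation from the defining representation \eqref{eq:timechangedBM} and the diagonal-matrix algebra, which is why it can safely be stated without proof.
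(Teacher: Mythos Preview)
Your argument is correct and complete. The paper actually states this proposition ``without proof'' (see the sentence immediately preceding Proposition~\ref{prop:stdgnts to gnts}), so there is nothing to compare against; your direct computation from the defining time-changed representation, together with the admissibility check $\bar\sigma_{0,n}^2=\sigma_n^2/s_n^2>0$, is precisely the routine verification the authors deemed unnecessary to spell out.
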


Suppose that $(X(t))_{t\ge0}$ is an arithmetic Brownian motion given by 
\[
X(t) = \mu t + \diag(\sigma) R^{\frac{1}{2}}B(t).
\]
Then we know that
\[
X(t)\disteq \mu t + \sqrt{t}\diag(\sigma) R^{\frac{1}{2}}B(1).
\]
If we replace a symmetric $\alpha$-stable process $(L(t))_{t\ge0}$ instead of $(B(t))_{t\ge0}$,
then we have
\[
X(t)\disteq \mu t + t^{1/\alpha} \diag(\sigma)R^{\frac{1}{2}} L(1),
\]
since $L(t)\disteq t^{1/\alpha}L(1)$. 
Applying the same arguments to the \gNTS~process case, we obtain a non-trivial result as the following proposition.
\begin{proposition}\label{pro:gNTS normalization in time}
Suppose $T>0$ and $X\sim\gNTS_N(\alpha, \theta, \beta, \mu, \sigma, R).$
Then we have 
\[
X(T) \disteq m+\diag(s) \Xi(1)
~~~\text{ for }~~~\Xi\sim \gStdNTS_N(\alpha,\theta_\Xi,\beta_\Xi,R),
\]
where the $n$-th elements of $\theta_\Xi$, $m\in\R^N$, $s\in\R_+^N$, and $\beta_\Xi\in\R^N$ are
\begin{align*}
m_n =T\left(\mu_n+\frac{\alpha_n\beta_n}{2}\theta_n^{\frac{\alpha_n}{2}-1}\right)
,
s_n = \sqrt{\frac{\alpha_n}{2}\theta_n^{\frac{\alpha_n}{2}-1}T\left(\left(\frac{2-\alpha_n}{2\theta_n}\right)\beta_n^2+\sigma_n^2\right)}
,
\theta_{\Xi,n}=\theta_n T^{\frac{2}{\alpha_n}}
,
\end{align*}
and $\beta_{\Xi,n}=\frac{\beta_n T^{\frac{2}{\alpha_n}}}{s_n}$, respectively. 
Here, $\alpha_n$, $\theta_n$, $\beta_n$, $\mu_n$, and $\sigma_n$ are the $n$-th elements of $\alpha$, $\theta$, $\beta$, $\mu$, and $\sigma$, respectively.
\end{proposition}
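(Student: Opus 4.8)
The plan is to prove the identity by rescaling the driving tempered stable subordinators and the conditionally Gaussian part of $X$, working throughout with the integral representation of the $\gNTS$ process. The subordinated--Brownian rewriting in \eqref{eq:timechangedBM} is correct for each fixed coordinate, but it cannot be read jointly, with a single $R$-correlated $B^0$, as the assertion that $X_n(T)=\mu_nT+\beta_n\Tau_n(T)+\sigma_nB^0_n(\Tau_n(T))$ simultaneously for all $n$: when the clocks $\Tau_n$ differ, the correct cross-covariance of the Gaussian part of $X(T)$ is $\rho_{k,n}\int_0^T\!\sqrt{\tau_k(u)\tau_n(u)}\,du$ rather than $\rho_{k,n}\min(\Tau_k(T),\Tau_n(T))$, and carrying the true joint law is the heart of the matter.

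\emph{Step 1 (scaling of the subordinators).} I would first show that for each $n$ the process $\bigl(T^{-2/\alpha_n}\Tau_n(Tt)\bigr)_{t\ge0}$ is a $\subTS(\alpha_n,1,\theta_{\Xi,n})$ subordinator with $\theta_{\Xi,n}=\theta_n T^{2/\alpha_n}$. This is immediate from \eqref{eq:chfTSSub}: its characteristic function at $u$ equals $\phi_{\Tau_n(Tt)}\!\bigl(T^{-2/\alpha_n}u\bigr)$, and factoring $(T^{-2/\alpha_n})^{\alpha_n/2}=T^{-1}$ out of the bracket $(\theta_n-iT^{-2/\alpha_n}u)^{\alpha_n/2}-\theta_n^{\alpha_n/2}$ collapses the exponent to $-t\bigl((\theta_{\Xi,n}-iu)^{\alpha_n/2}-\theta_{\Xi,n}^{\alpha_n/2}\bigr)$. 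Equivalently, setting $\tau^\Xi_n(u):=T^{1-2/\alpha_n}\tau_n(Tu)$, the process $\bigl(\int_0^t\tau^\Xi_n(u)\,du\bigr)_{t\ge0}$ has the law of a $\subTS(\alpha_n,1,\theta_{\Xi,n})$ subordinator; moreover the $\tau^\Xi_n$ are independent across $n$ and jointly independent of $B^0=R^{1/2}B$, because the $\tau_n$ are.

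\emph{Step 2 (rescaling $X(T)$).} In $X_n(T)=\mu_n T+\beta_n\int_0^T\tau_n(u)\,du+\sigma_n\int_0^T\sqrt{\tau_n(u)}\,dB^0_n(u)$ the substitution $u=Tv$ gives $\int_0^T\tau_n(u)\,du=T^{2/\alpha_n}\int_0^1\tau^\Xi_n(v)\,dv$. Conditioning on the whole family of subordinator paths and using $B^0\perp\tau$ with $d\langle B^0_k,B^0_n\rangle_u=\rho_{k,n}\,du$, the vector $G:=\bigl(\int_0^T\sqrt{\tau_n}\,dB^0_n\bigr)_n$ is centered Gaussian with conditional covariance $\bigl[\rho_{k,n}\int_0^T\sqrt{\tau_k\tau_n}\,du\bigr]_{k,n}$, and the same substitution rewrites this as $D\bigl[\rho_{k,n}\int_0^1\sqrt{\tau^\Xi_k\tau^\Xi_n}\,dv\bigr]_{k,n}D$ with $D=\diag(T^{1/\alpha_1},\dots,T^{1/\alpha_N})$, i.e.\ as $D$ times the conditional covariance of $G':=\bigl(\int_0^1\sqrt{\tau^\Xi_n}\,dB^0_n\bigr)_n$ times $D$. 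Hence $G$ and $DG'$ share the same conditional law given the subordinators, so $(\tau,G)\disteq(\tau,DG')$; pushing this through the measurable map that builds $X(T)$ from the subordinators and $G$ gives, jointly in $n$,
\[
X_n(T)\disteq\mu_n T+\beta_n T^{2/\alpha_n}\!\int_0^1\!\tau^\Xi_n(v)\,dv+\sigma_n T^{1/\alpha_n}\!\int_0^1\!\sqrt{\tau^\Xi_n(v)}\,dB^0_n(v).
\]

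\emph{Step 3 (matching constants) and the obstacle.} It then remains to check $\sigma_n T^{1/\alpha_n}=s_n\sigma_{0,\Xi,n}$, $\beta_n T^{2/\alpha_n}=s_n\beta_{\Xi,n}$, and $\mu_n T=m_n+s_n\mu_{0,\Xi,n}$, where $\mu_{0,\Xi},\sigma_{0,\Xi}$ are given by \eqref{eq:stdparam of mu and sigma} applied to $(\alpha,\theta_\Xi,\beta_\Xi)$. With $\theta_{\Xi,n}=\theta_n T^{2/\alpha_n}$, $\beta_{\Xi,n}=\beta_n T^{2/\alpha_n}/s_n$ and the stated $s_n$, these are routine identities in powers of $T$ and $\theta_n$; for instance $s_n^2\sigma_{0,\Xi,n}^2=\frac{2}{\alpha_n}s_n^2\theta_{\Xi,n}^{1-\alpha_n/2}-\frac{2-\alpha_n}{2\theta_{\Xi,n}}s_n^2\beta_{\Xi,n}^2=T^{2/\alpha_n}\bigl(\frac{2-\alpha_n}{2\theta_n}\beta_n^2+\sigma_n^2\bigr)-\frac{2-\alpha_n}{2\theta_n}\beta_n^2 T^{2/\alpha_n}=\sigma_n^2T^{2/\alpha_n}$. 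Substituting turns the last display into $m_n+s_n\bigl(\mu_{0,\Xi,n}+\beta_{\Xi,n}\int_0^1\tau^\Xi_n+\sigma_{0,\Xi,n}\int_0^1\sqrt{\tau^\Xi_n}\,dB^0_n\bigr)=m_n+s_n\Xi_n(1)$ with $\Xi\sim\gStdNTS_N(\alpha,\theta_\Xi,\beta_\Xi,R)$ the process assembled in Step 1 (its admissibility is automatic since $\sigma_{0,\Xi,n}=\sigma_nT^{1/\alpha_n}/s_n>0$), and since every step is joint in $n$ this yields $X(T)\disteq m+\diag(s)\Xi(1)$. The only real obstacle is Step 2: one must avoid the coordinatewise shorthand \eqref{eq:timechangedBM} and instead carry the joint law through the subordinator paths, after which the single substitution $u=Tv$ does all the work and Step 3 is bookkeeping. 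As a sanity check, equating $E[X_n(T)]$ and $\var(X_n(T))$ via \eqref{eq:variance} already forces $m_n$ and $s_n^2$ to the stated values, in line with Proposition~\ref{prop:stdgnts to gnts}.
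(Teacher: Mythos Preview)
Your argument is correct and follows the same two-stage structure as the paper's proof: a time-scaling identity for the $\gNTS$ process followed by standardization. The paper packages this by introducing an auxiliary process $Y\sim\gNTS_N(\alpha,\theta_Y,\beta_Y,\mu_Y,\sigma_Y,R)$ with $\theta_{Y,n}=\theta_nT^{2/\alpha_n}$, $\beta_{Y,n}=\beta_nT^{2/\alpha_n}$, $\mu_{Y,n}=\mu_nT$, $\sigma_{Y,n}=\sigma_nT^{1/\alpha_n}$, asserts $X(T)\disteq Y(1)$, and then invokes Proposition~\ref{prop:stdgnts to gnts}(b) to read off $m$, $s$, $\theta_\Xi$, $\beta_\Xi$. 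Your Steps~1--2 are precisely a proof of the asserted identity $X(T)\disteq Y(1)$ (which the paper does not justify), and your Step~3 reproduces the computations of Proposition~\ref{prop:stdgnts to gnts}(b) applied to $Y$. The conditional-covariance argument you give for the joint Gaussian part is the right way to handle the multivariate case, and it supplies exactly the detail the paper leaves implicit; otherwise the two proofs coincide.
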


\section{\label{sec:QuantoOptionPricingOngNTS}Quanto Option Pricing on gNTS Model}
We denote the domestic and the foreign risk-free interest rates by $r_d$ and $r_f$, respectively. 
Then, let $(S(t))_{t \geq 0}$ be the price process for the asset in foreign currency, 
$(V(t))_{t \geq 0}$ be the price process of the asset in domestic currency, 
and $(F(t))_{t \geq 0}$ be the foreign exchange (FX) rate process of the foreign currency with respect to the domestic currency.
That means $V(t) = F(t)S(t)$. 
We assume that $(F(t))_{t\ge0}$ and $(V(t))_{t\ge0}$ are given by
\begin{equation}\label{eq:F&V in P}
\begin{cases}
F(t) = F(0)\exp(X_F(t))\\
V(t) = V(0)\exp(X_V(t))
\end{cases} \text{ for } t\ge0,
\end{equation}
where $X(t) = (X_F(t), X_V(t))^\tr$ and 
\begin{equation}\label{eq:Y in P}
(X(t))_{t\ge0}
\sim \gNTS_2\left(
\left(\begin{matrix} \alpha_F\\ \alpha_V\end{matrix}\right),
\left(\begin{matrix} \theta_F\\ \theta_V\end{matrix}\right),
\left(\begin{matrix} \beta_F\\ \beta_V\end{matrix}\right),
\left(\begin{matrix} \mu_F\\ \mu_V\end{matrix}\right),
\left(\begin{matrix} \sigma_F\\ \sigma_V\end{matrix}\right),
\left(\begin{matrix} 1 &  \rho \\ \rho & 1\end{matrix}\right)
\right)
\end{equation}
under the physical (or market) measure $\P$. Then by Proposition \ref{pro:GirsanovGNTS}, we can find equivalent measure $\Q_{\hat\theta, \hat\beta}$ under which
\[
(X(t))_{t\ge0}
\sim \gNTS_2\left(
\left(\begin{matrix} \alpha_F\\ \alpha_V\end{matrix}\right),
\left(\begin{matrix} \hat\theta_F\\ \hat\theta_V\end{matrix}\right),
\left(\begin{matrix} \hat\beta_F\\ \hat\beta_V\end{matrix}\right),
\left(\begin{matrix} \mu_F\\ \mu_V\end{matrix}\right),
\left(\begin{matrix} \sigma_F\\ \sigma_V\end{matrix}\right),
\left(\begin{matrix} 1 &  \rho \\ \rho & 1\end{matrix}\right)
\right).
\]

To derive the risk-neutral measure, we have to find an equivalent measure, $\Q_{\hat\theta^*, \hat\beta^*}$, with $\hat\theta^*=(\hat\theta_F^*,\hat\theta_V^*)^\tr$ and $\hat\beta^*=(\hat\beta_F^*, \hat\beta_V^*)^\tr$, 
under which the discounted price processes $(\tilde F(t))_{t \geq 0}$ and $(\tilde V(t))_{t \geq 0}$ are martingales,
where $\tilde F(t)$ $=e^{(-r_d+r_f)t}F(t)$ and $\tilde V(t)=e^{-r_dt}V(t)$. The martingale property is satisfied if 
\[
E_{\Q_{\hat\theta^*, \hat\beta^*}}\left[\tilde F(t)\right]=F(0)
~~~\text{ and }~~~ E_{\Q_{\hat\theta^*, \hat\beta^*}}\left[\tilde V(t)\right]=V(0),
\]
which are equivalent to
\[
E_{\Q_{\hat\theta^*, \hat\beta^*}}\left[e^{X_F(t)}\right]=e^{(r_d-r_f)t}
~~~\text{ and }~~~
E_{\Q_{\hat\theta^*, \hat\beta^*}}\left[ e^{X_V(t)}\right]=e^{r_d t}.
\] 
These conditions are equivalent to
$\log \phi_{X_F(1)}(-i)=r_d-r_f$ and $\log \phi_{X_V(1)}(-i)=r_d$, respectively. That is
\[
\mu_F-\left(\left(\hat\theta^*_F-\hat\beta^*_F-\frac{\sigma_F^2}{2}\right)^{\frac{\alpha_F}{2}}-\left(\hat\theta^*_F\right)^\frac{\alpha_F}{2}\right)=r_d-r_f
\]
and
\[
\mu_V-\left(\left(\hat\theta^*_V-\hat\beta_V^*-\frac{\sigma_V^2}{2}\right)^{\frac{\alpha_V}{2}}-\left(\hat\theta^*_V\right)^\frac{\alpha_V}{2}\right)=r_d.
\]
Hence, $\hat\theta^*$ and $\hat\beta^*$ must satisfy:
\begin{description}
\item[RN.1:] $\hat\theta^*_F - \hat\beta^*_F - \frac{\sigma_F^2}2>0$ and $\theta^*_V - \hat\beta_V^* - \frac{\sigma_V^2}2>0$ for $E_{\Q_{\hat\theta^*, \hat\beta^*}}\left[ e^{X_F(t)}\right]$ and $E_{\Q_{\hat\theta^*, \hat\beta^*}}\left[ e^{X_V(t)}\right]$ to exist.
\item[RN.2:] The discounted price processes $(\tilde F(t))_{t \geq 0}$ and $(\tilde V(t))_{t \geq 0}$ are martingales, which are equivalent to
\[
\mu_F=r_d-r_f+\left(\hat\theta^*_F-\hat\beta^*_F-\frac{\sigma_F^2}{2}\right)^{\frac{\alpha_F}{2}}-\left(\hat\theta^*_F\right)^\frac{\alpha_F}{2}
\]
and
\[
\mu_V=r_d+\left(\hat\theta^*_V-\hat\beta_V^*-\frac{\sigma_V^2}{2}\right)^{\frac{\alpha_V}{2}}-\left(\hat\theta^*_V\right)^\frac{\alpha_V}{2}.
\]
\end{description}

We have the quanto call option payoff function $F_\text{fix}(S(T) - K)^+$ with the time to maturity $T$, strike price $K$ and the fixed exchange rate $F_\text{fix}$, where
$S(T)=\frac{V(T)}{F(T)}$. By \eqref{eq:F&V in P}, we have $S(T)=S(0)\exp(X_V(T)-X_F(T))$.
Therefore, the current option price is obtained by
\begin{equation}\label{eq:QuantoOptionPrice Expectation form}
E_{\Q_{\hat\theta^*, \hat\beta^*}}\left[e^{-r_d T}F_\text{fix}(S(T) - K)^+\right]
=e^{-r_d T}F_\text{fix}E_{\Q_{\hat\theta^*, \hat\beta^*}}\left[(S(0)\exp(X_F(T)-X_V(T)) - K)^+\right].
\end{equation}

\section{Conditional Real NVP}
%https://github.com/EmoryMLIP/DeepGenerativeModelingIntro/blob/main/examples/RealNVP.ipynb
Let $j\in \{1,2,\cdots, J\}$ where $J$ is the number of coupling layers. Define a $N$-dimensional masking vector $b$ as %$(1,\cdots,1,0,\cdots,0)^\tr$
$
b = (\underbrace{1, \cdots, 1}_{n\rm\ times}, \underbrace{0, \cdots, 0}_{N-n\rm\ times})^\tr.
$
We set a sequence of the masking vectors as $b^{(1)}=b$ and $b^{(j+1)}=I-b^{(j)}$ where a $N$-dimensional unit vector $I=(1,1,\cdots,1)^\tr$.

Let $y$ be a given $N$-dimensional column vector and define a affine coupling layer function $f^{(j)}:\R^N \rightarrow \R^N$ as
\[
f^{(j)}(y)=b^{(j)}\odot y+
\left(I-b^{(j)}
\right)\odot
\left(
	\left( y - \textbf{t}^{(j)}
		\left(
			b^{(j)}\odot y
		\right)
	\right)\odot\exp
	\left(
		-\textbf{s}^{(j)}( b^{(j)}\odot y )
	\right)
\right),
\]
where the scale function $\textbf{s}^{(j)}$ and translation function $\textbf{t}^{(j)}$ are both functions from $\R^N$ to $\R^N$, respectively, the function $\exp(\cdot)$ is the element-wise exponential function, and $\odot$ is an element-wise product. Here, the functions $\textbf{s}^{(j)}$ and $\textbf{t}^{(j)}$ are represented by deep-neural-networks. The flow-based generative neural network $f$ composed of $f = f^{(J)}\circ f^{(J-1)}\circ \cdots \circ f^{(1)}$ is called the \textit{real-valued non-volume preserving transformation model} or simply the \textit{RealNVP model}.
Consider a random variable $Z$ with a PDF $p_Z$,  and a random variable $Y$ with a PDF $p_Y$. 
We assume that $Z=f(Y)$. By the change of variables, the relation between $p_Y$ and $p_Z$  is given as
\begin{align*}
p_Y (y) 
%&= p_Z \left(g^{-1}(y)\right)\left|\det\left(\frac{\partial g^{-1}(y)}{\partial y}\right)\right|\\
&= p_Z \left(f(y)\right)\left|\det\left(\frac{\partial f(y)}{\partial y}\right)\right|
\\
&=p_Z(f(y))\prod_{j=1}^J\exp\left(-\left(I-b^{(j)}\right)^\tr\cdot \textbf{s}^{(j)}\left(b^{(j)}\odot y^{(j)}\right)\right),
\end{align*}
where $y^{(1)}=y$ and $y^{(j)}=f^{(j-1)}\circ f^{(j-2)}\circ\cdots\circ f^{(1)}(y)$ for $j>1$. For simplicity, we choose the multivariate standard Gaussian distribution for the prior distribution $p_Z$.

In order to apply the RealNVP transformations to \gNTS~model, we consider a set of model parameters $\Theta$.
The function $f_\Theta^{(j)}$ for all $j$-th affine coupling layers is defined as follows:
\[
f_\Theta^{(j)}(y)
=b^{(j)}\odot y+\left(I-b^{(j)}\right)
\odot\left(\left( y - \textbf{t}^{(j)}\left(b^{(j)}\odot y; \Theta\right)\right)\odot\exp\left(-\textbf{s}^{(j)}\left( b^{(j)}\odot y; \Theta \right)\right)\right),
\]
where $\textbf{s}^{(j)}$ and $\textbf{t}^{(j)}$ are represented by deep neural networks whose input variables consist of the $N$-dimensional $b^{(j)}\odot x$ and the set of parameters $\Theta$. 
We define the conditional flow-based function $f_\Theta$ composed by $f_\Theta = f_\Theta^{(J)}\circ f_\Theta^{(J-1)}\circ \cdots \circ f_\Theta^{(1)}$. Consider a random variable $Y_\Theta$ with a PDF $p_{Y_\Theta}$ and $Z$  with a PDF $p_Z$. We assume that $Z=f_\Theta(Y_\Theta)$. Then the PDF of $Y_\Theta$ under $\Theta$ is obtained using $p_Z$ as
\begin{align*}
p_{Y_\Theta} (y)
%&= p_Z \left(f_\Theta(y)\right)\left|\det\left(\frac{\partial f_\Theta(y)}{\partial y}\right)\right|
%\\
&=p_Z\left(f_\Theta(y)\right)\prod_{j=1}^J\exp\left(-\left(I-b^{(j)}\right)^\tr\cdot \textbf{s}^{(j)}\left(b^{(j)}\odot y^{(j)};\Theta\right)\right),
\end{align*}
where $y^{(1)}=y$ and $y^{(j)}=f_\Theta^{(j-1)}\circ f_\Theta^{(j-2)}\circ\cdots\circ f_\Theta^{(1)}(y)$ for $j>1$.
Note that if the neural networks $\textbf{s}^{(j)}$ and $\textbf{t}^{(j)}$ are trained to allow $z=f_\Theta (y)$ to follow the prior standard Gaussian distribution regardless of $\Theta$, the PDF of $Y$ can be explicitly estimated. % 240312 hgkim
In this case, this generalized RealNVP model is referred to as the \textit{conditional RealNVP (CRealNVP) model}.

\subsection{Training CRealNVP for 2-Dimensional \gStdNTS~ Distribution}
We take 2-dimensional \gStdNTS~model:
\[
\Xi 
\sim \gStdNTS_2\left(
\left(\begin{matrix} \alpha_1\\ \alpha_2\end{matrix}\right),
\left(\begin{matrix} \theta_1\\ \theta_2 \end{matrix}\right),
\left(\begin{matrix} \beta_1 \\ \beta_2 \end{matrix}\right),
\left(\begin{matrix} 1 &  \rho \\ \rho & 1\end{matrix}\right)
\right)
\]
We generate a set of gStdNTS parameters $\alpha_1$, $\alpha_2$, $\theta_1$, $\theta_2$, $\beta_1$, $\beta_2$ and $\rho$ randomly as follows:
\[
\alpha_n=2U_{1,n},~~~ \theta_n=10\tan\left(\frac{\pi U_{2,n}}{2}\right),~~~ \beta_n=\frac{2\theta_n\left(1-\frac{\alpha_n}{4}\right)}{\sqrt{\alpha_n(2-\alpha_n)}}(2U_{3,1}-1),~ \text{ and }~\rho=2U_{4,n}-1,
\]
where $U_{l,n}\sim \text{Beta}(2,2)$, that is a Beta distributed random number with parameters (2,2), for $l\in\{1,2,3,4\}$ and $n\in\{1,2\}$.
Then we generate $2^{10}$ number of \gStdNTS~random vectors of $\Xi(1)$ using the equation \eqref{eq:timechangedBM} with standard parameters given in \eqref{eq:stdparam of mu and sigma}. We repeat this process $2^{12}$ times and finally $2^{22}$ random vectors of the training set. The CRealNVP consists of six coupling layers and four hidden layers with 128 hidden nodes at each coupling layer for both  $\textbf{s}^{(j)}$ and  $\textbf{t}$. The activation functions of the hidden layers of the neural network are LeakyReLU functions. The neural networks are trained by minimizing the negative log-likelihood function with the ADAM optimizer, which ensures that the transformation $z=f_\Theta (y)$ follows the standard Gaussian distribution unconditionally on $\Theta$. % 240312 hgkim
After the training process, we obtain the PDF of the 2-dimensional gNTS distribution. 

\begin{table}[t]
\centering
\begin{tabular}{c|rrrr|cc}
\hline
 & \multicolumn{4}{c|}{gStdNTS parameters} & K-S & $p$-value\\
\hline
 Example 1 & $\alpha_1=1.25$ & $\theta_1=3.0$ & $\beta_1=0.0$ & $\rho = 0.0$ & $0.03668$ & $0.2712$ \\
           & $\alpha_2=1.25$ & $\theta_2=3.0$ & $\beta_2=0.0$ &   & & \\
\hline
 Example 2 & $\alpha_1=1.25$ & $\theta_1=3.0$ & $\beta_1=2.64$ & $\rho = -0.7$ & $0.03778$ & $0.2305$ \\
           & $\alpha_2=1.75$ & $\theta_2=5.0$ & $\beta_2=-4.49$ &   & & \\
\hline
 Example 3 & $\alpha_1=0.75$ & $\theta_1=1.0$ & $\beta_1=1.24$ & $\rho = 0.5$ & $0.03987$ & $0.1665$ \\
           & $\alpha_2=1.25$ & $\theta_2=3.0$ & $\beta_2=-2.64$ &   & & \\  
\hline
\end{tabular}
\caption{\label{Table:Example}Three examples of \gStdNTS~parameter sets}
\end{table}

As examples, we consider three sets of \gStdNTS~parameters excluded in the training set. Those parameters are presented in Table \ref{Table:Example}.
We simulate 1,000 random vectors for each parameter set in the table, respectively, and compare the sample with the distribution provided by CRealNVP trained for \gStdNTS~distribution. The 2-dimensional relative histogram of the simulated sample and the contour plot of PDFs generated by the CRealNVP method are exhibited in Figure \ref{fig:ExamplePDF} for the three-parameter sets, respectively. Graphically, the histogram and the contour plot have similar shapes. 
For the validation test of these three examples, we perform the Kolmogorov–Smirnov (K-S) test between the empirical CDF of the simulated samples and the CDF of the \gStdNTS~calculated by the trained CRealNVP method.
K-S statistic values for those three examples are presented in Table \ref{Table:Example} with $p$-values. Those three cases pass the K-S test, and there is no evidence that the empirical distribution is different from the \gStdNTS~distribution obtained by the CRealNVP method at the $5\%$ significant level in this investigation.
 
\begin{figure}
\centering
Example 1: 
$\alpha=(1.25, 1.25)^\tr$, $\theta=(3.0,3.0)^\tr$, $\beta = (0.0, 0.0)^\tr$, $\rho = 0.0$\\
\includegraphics[width = 8cm, height=6cm]{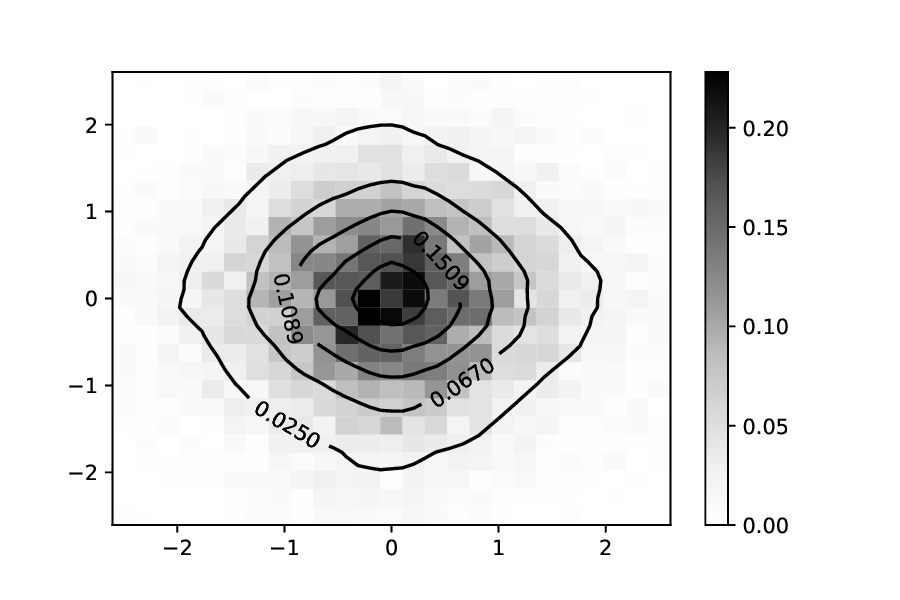}
\hspace{-1cm}
\includegraphics[width = 8cm, height=6cm]{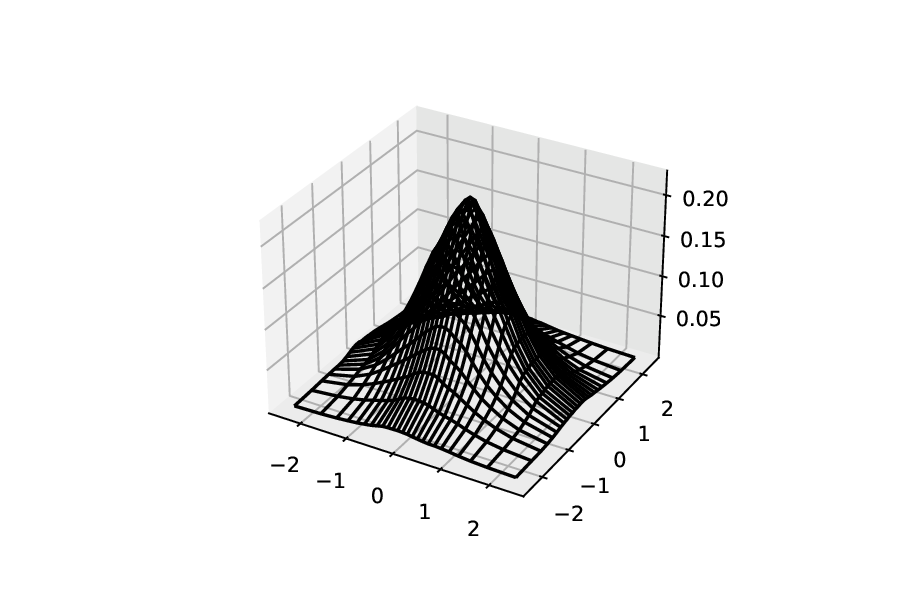}\\
~\\
Example 2:
$\alpha=(1.25, 1.75)^\tr$, $\theta=(3.0,5.0)^\tr$, $\beta = (2.64, -4.49)^\tr$, $\rho = -0.70$\\
\includegraphics[width = 8cm, height=6cm]{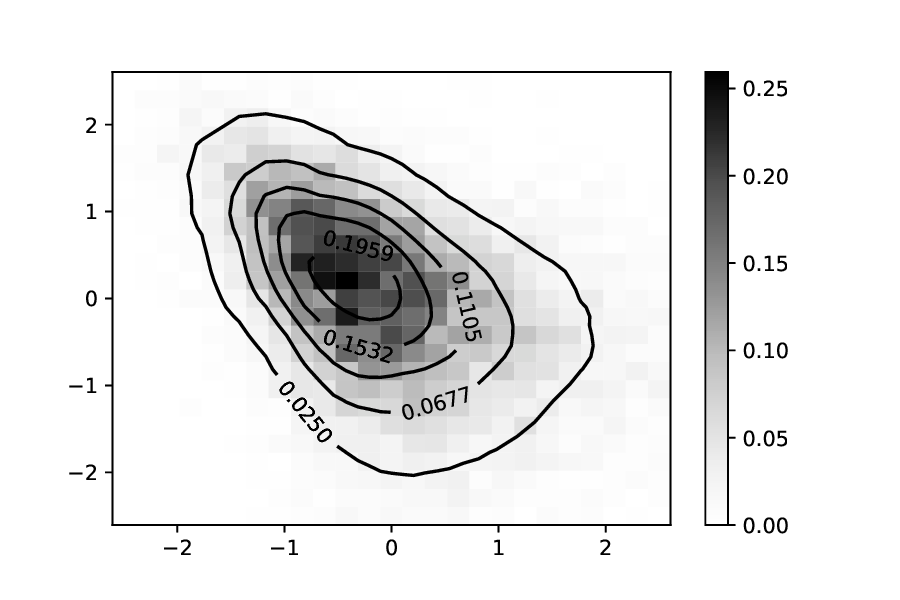}
\hspace{-1cm}
\includegraphics[width = 8cm, height=6cm]{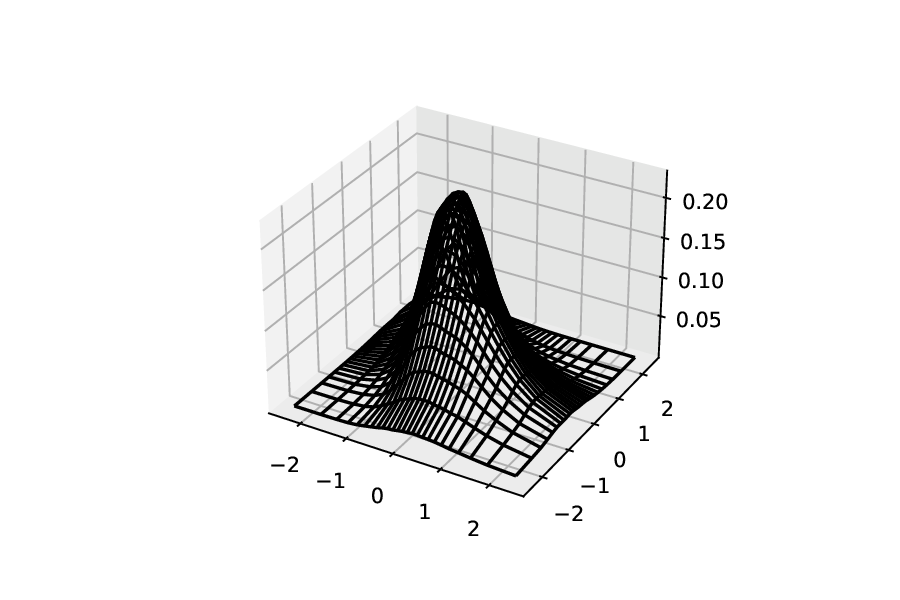}\\
~\\
Example 3:
$\alpha=(0.75, 1.25)^\tr$, $\theta=(1.0,3.0)^\tr$, $\beta = (1.24, -2.74)^\tr$, $\rho = 0.5$\\
\includegraphics[width = 8cm, height=6cm]{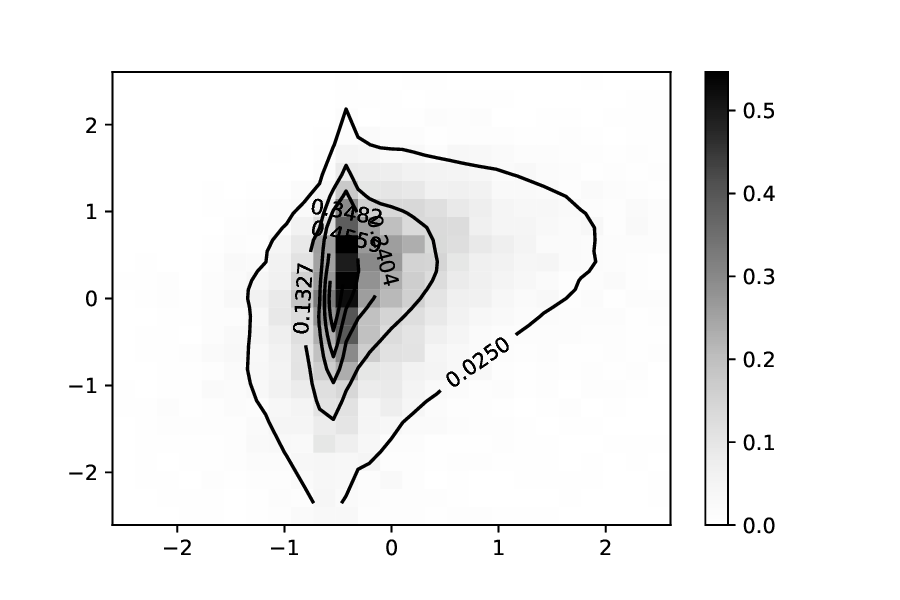}
\hspace{-1cm}
\includegraphics[width = 8cm, height=6cm]{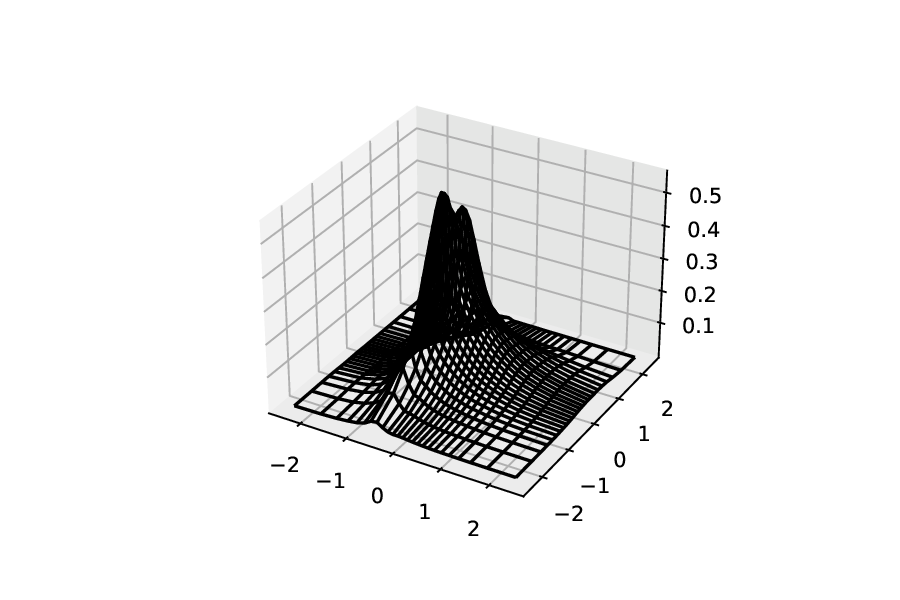}
\caption{\label{fig:ExamplePDF}Contour graphs (left) and 3d graphs (right) of the PDFs for the three examples of gStdNTS distribution.}
\end{figure}

\subsection{Parameter Estimation}
For an empirical illustration, we consider daily prices of the Japanese Yen (JPY)-U.S. Dollar (USD) exchange rate and USD-valued Nikkei225\footnote{Nihon Keizai Shinbun 225 index} (N225) from January 2, 2020 to December 29, 2023.  The USD-valued N225 prices are obtained by converting the original JPY-valued Nikkei225 levels into U.S. dollars using the JPY-USD exchange rate. Suppose $(F(t)))_{t\ge0}$ is the process of the JPY-USD exchange rate, such that one JPY to $F(t)$ dollar at time $t$, and $(V(t))_{t\ge0}$ is the dollar-valued price process of the N225, that is $V(t)=S(t)F(t)$, where $S(t)$ is the N225 at time $t$. We estimate market parameters for daily log-returns $X_F(t)$ and $X_V(t)$ of $V(t)$ and $F(t)$, respectively, as \eqref{eq:F&V in P}  in Section \ref{sec:QuantoOptionPricingOngNTS}.
As \eqref{eq:Y in P}, we set
with 
\[
(X(t))_{t\ge0}\sim
\gNTS_2\left(
\left(\begin{matrix} \alpha_F\\ \alpha_V\end{matrix}\right),
\left(\begin{matrix} \theta_F\\ \theta_V\end{matrix}\right),
\left(\begin{matrix} \beta_F\\ \beta_V\end{matrix}\right),
\left(\begin{matrix} \mu_F\\ \mu_V\end{matrix}\right),
\left(\begin{matrix} \sigma_F\\ \sigma_V\end{matrix}\right),
\left(\begin{matrix} 1 &  \rho \\ \rho & 1\end{matrix}\right)
\right)
\]
with $X(t) = (X_F(t), X_V(t))^\tr$.
We apply Proposition \ref{pro:gNTS normalization in time} for the daily time step $\varDelta t$, we have
\begin{equation}\label{eq:XtoZ}
\left(\begin{matrix}
X_F(\varDelta t)\\
X_V(\varDelta t)
\end{matrix}\right)
 \disteq  
\left(\begin{matrix}
m_F\\
m_V
\end{matrix}\right)
 + 
\left(\begin{matrix}
s_F & 0\\
0 & s_V
\end{matrix}\right)
\left(\begin{matrix}
\Xi_F(1)\\
\Xi_V(1)
\end{matrix}\right)
\end{equation}
where $\Xi(1)=(\Xi_F(1), \Xi_V(1))^\tr$ and
\[
\Xi 
\sim \gStdNTS_2\left(
\left(\begin{matrix} \alpha_F\\ \alpha_V\end{matrix}\right),
\left(\begin{matrix} \theta_{\Xi,F}\\ \theta_{\Xi,V} \end{matrix}\right),
\left(\begin{matrix} \beta_{\Xi,F} \\ \beta_{\Xi,V} \end{matrix}\right),
\left(\begin{matrix} 1 &  \rho \\ \rho & 1\end{matrix}\right)
\right)
\]
with
\begin{align*}
\theta_{\Xi,n}=\theta_n \varDelta t^{\frac{2}{\alpha_n}},~~~\beta_{\Xi,n}=\frac{\beta_n\varDelta t^{\frac{2}{\alpha_n}}}{s_n},~~~
m_n =\varDelta t\left(\mu_n +\frac{\alpha_n\beta_n}{2}\theta_n^{\frac{\alpha_n}{2}-1}\right),
\end{align*}
and
\begin{align*}
s_n = \sqrt{\varDelta t\frac{\alpha_n}{2}\theta_n^{\frac{\alpha_n}{2}-1}\left(\left(\frac{2-\alpha_n}{2\theta_n}\right)\beta_n^2+\sigma_n^2\right)},
\end{align*}
for $n\in\{F, V\}$.
We estimate $(m_F, m_V)^\tr$ and $(s_F, s_V)^\tr$ by the sample mean and sample standard deviation of the FX return and the index return, respectively.  Then, we fit the gStdNTS parameters of $\Xi$ using maximum likelihood estimation with the PDF trained by CRealNVP transformations.  

We repeat this parameter fit process for the other pair of a FX rate \& a market index such as British pound (GBP)-USD \& Financial Times Stock Exchange 100 Index (FTSE), Euro currency (EUR)-USD \& German stock Index\footnote{Deutscher Aktienindex} (DAX),
and Korean won (KRW)-USD \& KOSPI200 Index\footnote{Korean Composite Stock Price 200 Index} (KS200). The estimation results are also presented in Table \ref{Table:ParamFit}. The 2-dimensional histograms of the standardized log-returns of those 4 pairs are exhibited in Figure \ref{fig:MLEgNTS} together with the PDF contour map of gStdNTS distribution.
The estimated parameters for those four pairs of the FX rate and index returns are provided in the first row of Table \ref{Table:ParamFit}. 
In the last column of the table, we present the Kolmogorov–Smirnov (K-S) statistic and $p$-values for the goodness of fit test for the 2 dimensional CDF of empirical data and \gNTS~model,  respectively\footnote{Details of the K-S test for the 2-dimensional distribution in \cite{NAAMAN2021109088}.}.  In addition, the parameters of the NTS model are estimated as a benchmark model for the same data. In the NTS model, we assume that $X$ follows the NTS process and $X(\varDelta t) = m+\diag(s) \Xi(1)$ where $m=(m_F, m_V)^\tr$ and $s=(s_F, s_V)^\tr$ are the mean and standard deviation vectors, respectively, and $\Xi(1)$ is the standard NTS distributed with parameters $\left(\alpha, \theta, \beta_\Xi, \left(\begin{smallmatrix} 1&\rho\\ \rho & 1\end{smallmatrix}\right)\right)$ with $\alpha\in(0,2)$, $\theta>0$, $\beta_\Xi = (\beta_{\Xi,F}, \beta_{\Xi,V})^\tr$ and $\rho \in[-1,1]$.  More details on parameter fitting are described in the literature, including \cite{Kim:2022}. Comparing the K-S statistic, we see that the K-S statistic values for the NTS model are significantly larger than those of the \gNTS~model. According to the $p$-values, the estimated NTS distribution is rejected at the 5\% significance level, while the \gNTS~model is not rejected. That is, the performance of the parameters fitted to the \gNTS~model is much better than that of the NTS model.
\begin{sidewaystable}
\centering
\begin{footnotesize}
\begin{tabular}{c|c|c|ccc|c}
\hline
  & mean & Standard Deviation & \multicolumn{3}{c|}{Model Parameters} & K-S($p$-value)\\
\hline
  & & & \multicolumn{3}{c|}{\gStdNTS~~paramters} & \\
JPY-USD & $m_F = -2.766 \cdot 10^{-4}$ & $s_F = 5.812 \cdot 10^{-3}$ & $\alpha_{F} = 1.0171$ & $\theta_{\Xi,F} = 1.365$ & $\beta_{\Xi,F} = 1.978 \cdot 10^{-1}$ & $0.0409$ \\ 
N225 & $m_V = 9.847 \cdot 10^{-5}$ & $s_V = 1.357 \cdot 10^{-2}$ & $\alpha_V = 1.2300$ & $\theta_{\Xi,V} = 6.147 \cdot 10^{-1}$ & $\beta_{\Xi,V} = -2.188 \cdot 10^{-1}$ & $(0.1533)$ \\
 & & & \multicolumn{3}{c|}{$\rho = 0.3134$} & \\
\cline{4-6} \cline{6-7}
  & & & \multicolumn{3}{c|}{\stdNTS~~paramters} & \\
   &   &  & $\alpha = 0.7644$ & $\theta = 1.0719$ & $\beta_{\Xi,F} = 1.735 \cdot 10^{-1}$ & $0.3071$ \\ 
   & & & & & $\beta_{\Xi,V} = -7.530 \cdot 10^{-2}$ & $(0.0000)$ \\
 & & & \multicolumn{3}{c|}{$\rho = 0.3319$} & \\
\hline
  & & & \multicolumn{3}{c|}{\gStdNTS~~paramters} & \\
GBP-USD & $m_F = -2.913 \cdot 10^{-5}$ & $s_F = 6.211 \cdot 10^{-3}$ & $\alpha_{F} = 1.2373$ & $\theta_{\Xi,F} = 1.332$ & $\beta_{\Xi,F} = -1.940 \cdot 10^{-1}$ & $0.0288$ \\ 
FTSE & $m_V = -4.320 \cdot 10^{-6}$ & $s_V = 1.323 \cdot 10^{-2}$ & $\alpha_V = 0.9258$ & $\theta_{\Xi,V} = 2.892 \cdot 10^{-2}$ & $\beta_{\Xi,V} = -1.534 \cdot 10^{-2}$ & $(0.7530)$ \\
 & & & \multicolumn{3}{c|}{$\rho = 0.5366$} & \\
\cline{4-6} \cline{6-7}
  & & & \multicolumn{3}{c|}{\stdNTS~~paramters} & \\
   &   &  & $\alpha = 0.7195$ & $\theta = 0.9571$ & $\beta_{\Xi,F} = 5.590 \cdot 10^{-2}$ & $0.3143$ \\ 
   & & & & & $\beta_{\Xi,V} = -1.495 \cdot 10^{-1}$ & $(0.0000)$ \\
 & & & \multicolumn{3}{c|}{$\rho = 0.4379$} & \\
\hline
  & & & \multicolumn{3}{c|}{\gStdNTS~~paramters} & \\
EUR-USD & $m_F = -1.343 \cdot 10^{-5}$ & $s_F = 4.911 \cdot 10^{-3}$ & $\alpha_{F} = 1.3134$ & $\theta_{\Xi,F} = 4.341$ & $\beta_{\Xi,F} = -1.414 \cdot 10^{-1}$ & $0.0294$ \\ 
DAX & $m_V = 2.065 \cdot 10^{-4}$ & $s_V = 1.487 \cdot 10^{-2}$ & $\alpha_V = 0.9193$ & $\theta_{\Xi,V} = 1.783 \cdot 10^{-2}$ & $\beta_{\Xi,V} = -3.488 \cdot 10^{-3}$ & $(0.6852)$ \\
 & & & \multicolumn{3}{c|}{$\rho = 0.4192$} & \\
\cline{4-6} \cline{6-7}
  & & & \multicolumn{3}{c|}{\stdNTS~~paramters} & \\
   &   &  & $\alpha = 1.0756$ & $\theta = 1.0905$ & $\beta_{\Xi,F} = 6.332 \cdot 10^{-2}$ & $0.2947$ \\ 
   & & & & & $\beta_{\Xi,V} = -2.359 \cdot 10^{-1}$ & $(0.0000)$ \\
 & & & \multicolumn{3}{c|}{$\rho = 0.3225$} & \\
\hline
  & & & \multicolumn{3}{c|}{\gStdNTS~~paramters} & \\
KRW-USD & $m_F = -1.170 \cdot 10^{-4}$ & $s_F = 5.929 \cdot 10^{-3}$ & $\alpha_{F} = 1.1830$ & $\theta_{\Xi,F} = 1.090 \cdot 10^{1}$ & $\beta_{\Xi,F} = 8.177 \cdot 10^{-1}$ & $0.0315$ \\ 
KS200 & $m_V = 9.583 \cdot 10^{-5}$ & $s_V = 1.567 \cdot 10^{-2}$ & $\alpha_V = 1.3038$ & $\theta_{\Xi,V} = 4.696 \cdot 10^{-3}$ & $\beta_{\Xi,V} = -7.793 \cdot 10^{-3}$ & $(0.5658)$ \\
 & & & \multicolumn{3}{c|}{$\rho = 0.6399$} & \\
\cline{4-6} \cline{6-7}
  & & & \multicolumn{3}{c|}{\stdNTS~~paramters} & \\
   &   &  & $\alpha = 0.7631$ & $\theta = 1.8986$ & $\beta_{\Xi,F} = 3.070 \cdot 10^{-1}$ & $0.3454$ \\ 
   & & & & & $\beta_{\Xi,V} = -3.070 \cdot 10^{-1}$ & $(0.0000)$ \\
 & & & \multicolumn{3}{c|}{$\rho = 0.6185$} & \\
\hline
\end{tabular}
\end{footnotesize}
\caption{\label{Table:ParamFit}Results of parameter estimation of \gNTS~model to the 4 pairs of FX returns and foreign index returns, respectively}
\end{sidewaystable}

\begin{figure}
\centering
\includegraphics[width = 7cm]{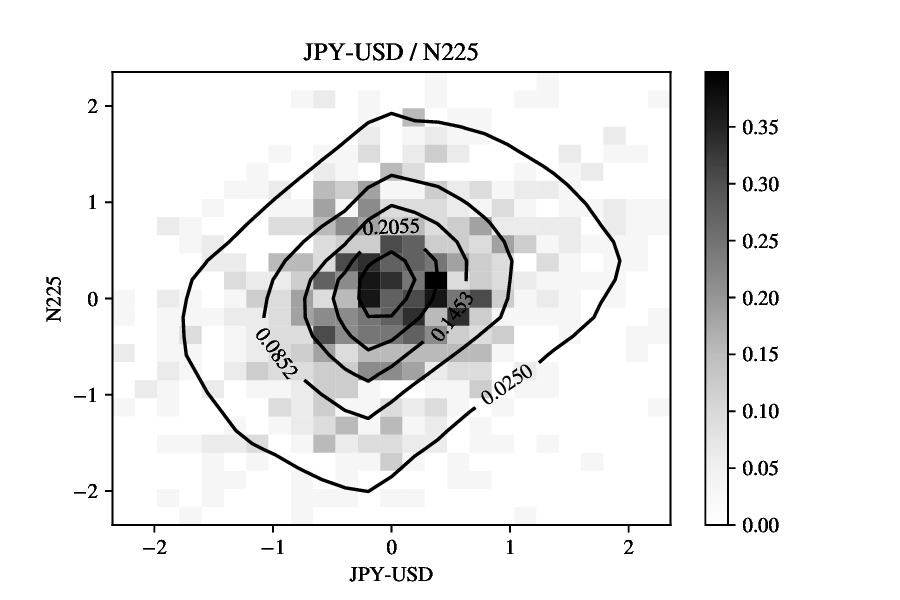}
\includegraphics[width = 7cm]{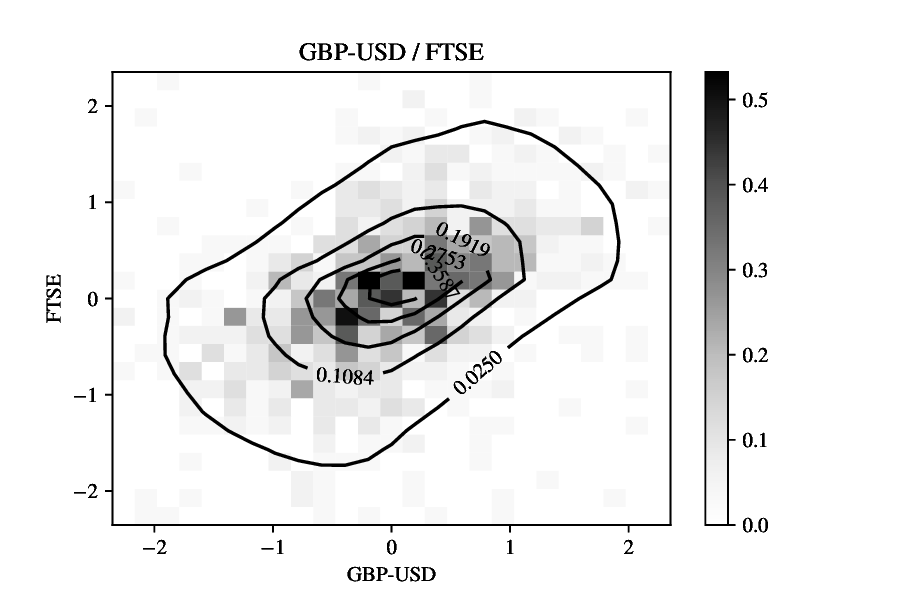}
\\
\includegraphics[width = 7cm]{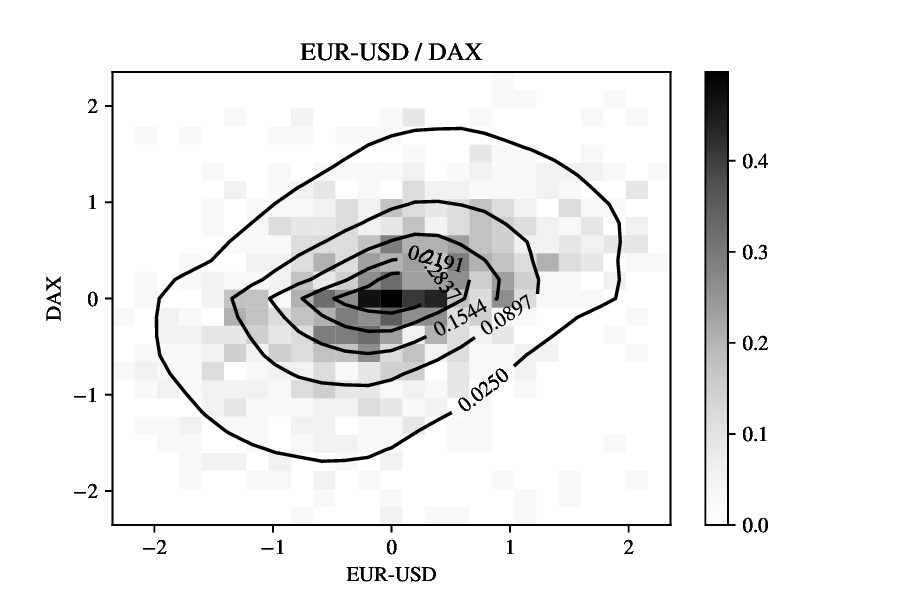}
\includegraphics[width = 7cm]{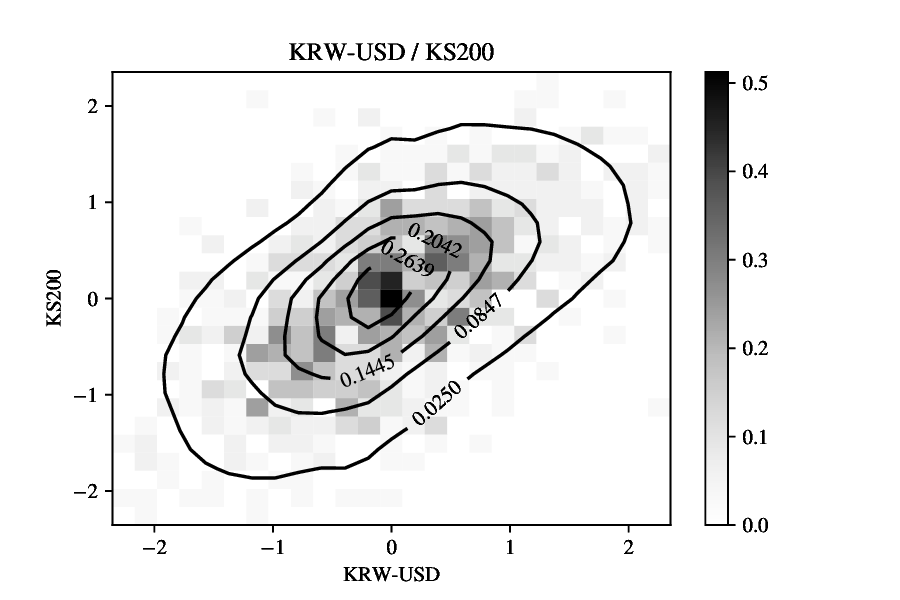}
\caption{\label{fig:MLEgNTS}Histograms and contour graphs of the PDFs of the standardized log returns.  The top-left is for JPY-USD and Nikkei 225 returns, the top-right is for GBP-USD and FTSE returns, the bottom-left is for EUR-USD and DAX returns, and the bottom-right is for KRW-USD and KS200 returns.}
\end{figure}

\subsection{Quanto Option Pricing}
Suppose \gNTS~parameters of $X$ is given by \eqref{eq:XtoZ}. By proposition \ref{prop:stdgnts to gnts} (a), we have
\[
(X(t))_{t\ge0}\sim
\gNTS_2\left(
\left(\begin{matrix} \alpha_F\\ \alpha_V\end{matrix}\right),
\left(\begin{matrix} \theta_F\\ \theta_V\end{matrix}\right),
\left(\begin{matrix} \beta_F\\ \beta_V\end{matrix}\right),
\left(\begin{matrix} \mu_F\\ \mu_V\end{matrix}\right),
\left(\begin{matrix} \sigma_F\\ \sigma_V\end{matrix}\right),
\left(\begin{matrix} 1 &  \rho \\ \rho & 1\end{matrix}\right)
\right)
\]
where
\begin{align*}
\theta_n = \frac{\theta_{\Xi,n}}{\varDelta t^{\frac{2}{\alpha_n}}}, ~ \beta_n =\frac{ \beta_{\Xi,n}s_n}{\varDelta t^{\frac{2}{\alpha_n}}},~\mu_n=\frac{m_n}{\varDelta t}-\frac{\alpha_n\beta_n}{2}\theta_n^{\frac{\alpha_n}{2}-1},
\sigma_n = \sqrt{\frac{2s_n^2\theta_n^{1-\frac{\alpha_n}{2}}}{\alpha_n \varDelta t}-\left(\frac{2-\alpha_n}{2\theta_n}\right)\beta_n^2}
\end{align*}
for $n\in\{F, V\}$.

Let $r = (r_F, r_V)^\tr$ with $r_F=r_d-r_f$ and $r_V = r_d$ to simplify notations.
There are infinitely many risk-neutral parameters $\hat\theta=(\hat\theta_F, \hat\theta_V)^\tr$ and $\hat\beta=(\hat\beta_F, \hat\beta_V)^\tr$ satisfying \textbf{RN.2},
which is equivalent to 
\begin{align*}
\hat\beta_n=\hat\theta_n-\frac{\sigma_n^2}{2}-\left(\mu_n-r_n+\hat\theta_n^{\frac{\alpha_n}{2}}\right)^{\frac{2}{\alpha_n}}, ~~~\text{ for } n\in\{F, V\}.
\end{align*}
To select one set of risk-neutral parameters, we try to find the parameter set which is as close to the physical parameters as possible.
That is we find $\hat\theta^*=(\hat\theta^*_F, \hat\theta^*_V)^\tr$ and $\hat\beta^*=(\hat\beta^*_F, \hat\beta^*_V)^\tr$ close to the physical parameters $\theta$ and $\beta$ as follows:
\[
(\hat\theta^*_n, \hat\beta^*_n) = \argmin_{(\hat\theta_n, \hat\beta_n)} \sqrt{\left(\hat\theta_n-\theta_n\right)^2+\left(\hat\beta_n-\beta_n\right)^2}, ~~~\text{ for } n\in\{F, V\}.
\]
Then we obtain
\[
(X(t))_{t\ge0}\sim
\gNTS_2\left(
\left(\begin{matrix} \alpha_F\\ \alpha_V\end{matrix}\right),
\left(\begin{matrix} \hat\theta^*_F\\ \hat\theta^*_V\end{matrix}\right),
\left(\begin{matrix} \hat\beta^*_F\\ \hat\beta^*_V\end{matrix}\right),
\left(\begin{matrix} \mu_F\\ \mu_V\end{matrix}\right),
\left(\begin{matrix} \sigma_F\\ \sigma_V\end{matrix}\right),
\left(\begin{matrix} 1 &  \rho \\ \rho & 1\end{matrix}\right)
\right)
\]
under the risk-neutral measure $\Q_{\hat\theta^*, \hat\beta^*}$.
Recall the quanto call option pricing formula \eqref{eq:QuantoOptionPrice Expectation form} in Section \ref{sec:QuantoOptionPricingOngNTS},
we consider the quanto call option with the time to maturity $T$, strike price $K$ and the fixed exchange rate $F_\text{fix}$. 
To calculate the quanto call price, we must know the distribution of $(X_F(T),X_V(T))^\tr$ for the time to maturity $T$.
We apply Proposition \ref{pro:gNTS normalization in time} to $X$ under the risk-neutral measure $\Q_{\hat\theta^*, \hat\beta^*}$, we obtain as follows:
\[
\left(\begin{matrix}
X_F(T)\\
X_V(T)
\end{matrix}\right)
 \disteq  
\left(\begin{matrix}
\hat m_F\\
\hat m_V
\end{matrix}\right)
 + 
\left(\begin{matrix}
\hat s_F & 0\\
0 & \hat s_V
\end{matrix}\right)
\left(\begin{matrix}
\hat \Xi_F(1)\\
\hat \Xi_V(1)
\end{matrix}\right)
\]
with
\[
\hat \Xi =(\hat \Xi_F,\hat \Xi_V)^\tr
\sim \gStdNTS_2\left(
\left(\begin{matrix} \alpha_F\\ \alpha_V\end{matrix}\right),
\left(\begin{matrix} \hat\theta_{\hat\Xi,F}\\ \hat\theta_{\hat\Xi,V} \end{matrix}\right),
\left(\begin{matrix} \hat\beta_{\hat\Xi,F} \\ \hat\beta_{\hat\Xi,V} \end{matrix}\right),
\left(\begin{matrix} 1 &  \rho \\ \rho & 1\end{matrix}\right)
\right)
\]
where
\begin{align*}
\hat\theta_{\hat\Xi,n}=\hat\theta^*_n T^{\frac{2}{\alpha_n}},
~~~
\hat\beta_{\hat\Xi,n}=\frac{\hat\beta^*_n T^{\frac{2}{\alpha_n}}}{s_n},
~~~
\hat s_n = \sqrt{\frac{\alpha_nT}{2}\left(\hat\theta^*_n\right)^{\frac{\alpha_n}{2}-1}\left(\left(\frac{2-\alpha_n}{2\hat\theta^*_n}\right)\left(\hat\beta^*_n\right)^2+\sigma_n^2\right)},
\end{align*}
and
\begin{align*}
&\hat m_n =T\left(r_n+\left(\hat\theta^*_n-\hat\beta^*_n-\frac{\sigma_n^2}{2}\right)^{\frac{\alpha_n}{2}}+\left(\frac{\alpha_n\hat\beta^*_n}{2\hat\theta^*_n}-1\right)\left(\hat\theta^*_n\right)^{\frac{\alpha_n}{2}}\right), ~~~\text{ for }~~~ n\in\{F, V\}.
\end{align*}
Using the parameters, we continue option pricing of the equation \eqref{eq:QuantoOptionPrice Expectation form}:
\begin{align*}
&E_{\Q_{\hat\theta^*, \hat\beta^*}}\left[e^{-r_d T}F_\text{fix}(S(T) - K)^+\right]
\\
&
=e^{-r_d T}F_\text{fix}E_{\Q_{\hat\theta^*, \hat\beta^*}}\left[(S(0)\exp(\hat m_F+\hat s_F\hat\Xi_F(1)-\hat m_V-\hat s_V\hat\Xi_V(1)) - K)^+\right].
\end{align*}
Let 
\[
H(\hat\Xi(1))=\left(e^{\hat s_F\hat\Xi_F(1)-\hat s_V\hat\Xi_V(1)} - \frac{K}{S(0)e^{\hat m_F-\hat m_V}}\right)^+.
\]
Then we have
\[
E_{\Q_{\hat\theta^*, \hat\beta^*}}\left[e^{-r_d T}F_\text{fix}(S(T) - K)^+\right]
= e^{-r_d T}F_\text{fix}S(0)e^{\hat m_F-\hat m_V}E_{\Q_{\hat\theta^*, \hat\beta^*}}\left[H(\hat\Xi(1))\right].
\]
By the CRealNVP model, we set $\hat\Xi(1) = f_\Theta^{-1}(Z)$ with the standard normal $Z$ and the parameters $\Theta =\left(\alpha_F, \alpha_V, \hat\theta_{\hat\Xi,F}, \hat\theta_{\hat\Xi,V}, \hat\beta_{\hat\Xi,F}, \hat\beta_{\hat\Xi,V},\rho \right)$.
Therefore, we have
\[
E_{\Q_{\hat\theta^*, \hat\beta^*}}\left[H(\hat\Xi(1))\right]=E_{\Q_{\hat\theta^*, \hat\beta^*}}\left[H\left( f_\Theta^{-1}(Z) \right)\right]
=\iint H\left(f_\Theta^{-1}(z)\right) p_Z(z)dz,
\]
where $p_Z(z)$ is the PDF of the 2-dimensional standard normal distribution
 The inverse function $f_\Theta^{-1}$ is obtained by the definition of the CRealNVP model as $f_\Theta^{-1}=(f_\Theta^{(1)})^{-1}\circ(f_\Theta^{(2)})^{-1}\circ\cdots\circ(f_\Theta^{(J)})^{-1}$ where
 \[
 \left(f_\Theta^{(j)}\right)^{-1}(z)=b^{(j)}\odot z+\left(I-b^{(j)}\right)\odot\left(z\odot \exp\left(\textbf{s}^{(j)}(b^{(j)}\odot z)\right)+\textbf{t}^{(j)}\left(b^{(j)}\odot z\right)  \right).
 \]
%\footnote{
% \[
% \left(f_\Theta^{(j)}\right)^{-1}(z)=b^{(j)}\odot z+\left(I-b^{(j)}\right)\odot\left(z\odot \exp\left(\textbf{s}^{(j)}(b^{(j)}\odot x)\right)+\textbf{t}^{(j)}\left(b^{(j)}\odot x\right)  \right)
% \]
%for $j\in\{1,2,\cdots,J\}$.
%Since we have $b^{(j)}\odot x=b^{(j)}\odot f_\Theta^{(j)}(x)=b^{(j)}\odot z$, we obtain
% \[
% \left(f_\Theta^{(j)}\right)^{-1}(z)=b^{(j)}\odot z+\left(I-b^{(j)}\right)\odot\left(z\odot %\exp\left(\textbf{s}^{(j)}(b^{(j)}\odot z)\right)+\textbf{t}^{(j)}\left(b^{(j)}\odot z\right)  %\right).
% \]
% }.
The double integral can be approximated by a numerical integration.

\begin{table}[t]
\centering
\begin{tabular}{c|cccc}
 \hline
 & $S(0)$ & $F_\text{fix}$ & $r_f$ \\
 \hline
JPY-USD/N225 & $33464.2$ & $7.071\cdot 10^{-3}$ & $-0.1\%$\\
GBP-USD/FTSE & $7733.20$ & $1.273$ & $5.25\%$\\
EUR-USD/DAX & $16751.6$ & $1.107$ & $4.5\%$\\
KRW-USD/KS200 & $357.990$ & $7.826\cdot 10^{-4}$ &$3.5\%$ \\
\hline
\end{tabular}
\caption{\label{Table:FixedValues}Market information on December 29, 2023.  $S(0)$,  $F_\text{fix}$, and $r_f$ mean the foreign index, FX rate,  and foreign risk-free rate with respect to the 4 pairs of the examples, respectively}
\end{table}

For example, we calculate $(\hat m_F, \hat m_V)^\tr$, $(\hat s_F, \hat s_V)^\tr$, $(\hat \theta_{\hat\Xi,F}, \hat \theta_{\hat\Xi,V})^\tr$ and $(\hat \beta_{\hat\Xi,F}, \hat \beta_{\hat\Xi,V})^\tr$ for $T\in\{1/52 (1$-week$)$, $2/52 (2$-weeks$)$, $3/52 (3$-weeks$)$, $4/52 (4$-weeks$)\}$ based on the estimated parameters in Table \ref{Table:ParamFit} for the 4 cases (JPY-USD/N225, GBP-USD/FTSE, EUR-USD/DAX, KRW-USD/KS200). In this calculation, we set $r_d=5.5\%$ which is the U.S standard rate (Federal Fund Rate), and we set $r_f$ in $\{-0.1\%$,  $5.25\%$, $4.5\%$, $3.5\%\}$ which are standard rates of Japan, U.K., European Union, and Korea, respectively, on December 2023 (See Table \ref{Table:FixedValues}). The risk neutral parameters based on this calculation are presented in Table \ref{Table:RNParameterSet}. 
Moreover, the values of $S(0)$ and $F_\text{fix}$ are presented in Table \ref{Table:FixedValues} which were observed on December 29, 2023.

The quanto call option prices for the 4 cases with the risk-neutral parameters in Table \ref{Table:RNParameterSet} for time to maturities 1-4 weeks are calculated and presented in Figure \ref{Figure:QuantoCallPrices}. Since the index prices are all different, we use the moneyness $M = K/S(0)$ instead of the strike price $K$, and change the function $H$ to
\[
H(\hat\Xi(1))=\left(e^{\hat s_F\hat\Xi_F(1)-\hat s_V\hat\Xi_V(1)} - M e^{-\hat m_F+\hat m_V}\right)^+.
\]
For this reason, the $x$-axes of the 4 plates of Figure \ref{Figure:QuantoCallPrices} present the moneyness $K/S(0)$.
\begin{sidewaystable}
\centering
\begin{footnotesize}
\begin{tabular}{c|c|l|l|ll}
\hline
  & $T$ (week) & mean & Standard Deviation & \multicolumn{2}{c}{Risk-neutral \gStdNTS~parameters} \\
\hline
JPY-USD
 & 1 & $\hat m_F = 9.936 \cdot 10^{-4}$ & $\hat s_F = 1.291 \cdot 10^{-2}$ & $\hat\theta_{\hat\Xi,F} = 3.040 \cdot 10^{1}$ & $\hat\beta_{\hat\Xi,F} = 3.889$ \\ 
N225
 &  & $\hat m_V = -4.661 \cdot 10^{-4}$ & $\hat s_V = 2.993 \cdot 10^{-2}$ & $\hat \theta_{\hat\Xi,V} = 8.001$ & $\hat \beta_{\hat\Xi,V} = -1.405$ \\
\cline{2-6}
 & 2 & $\hat m_F = 1.987 \cdot 10^{-3}$ & $\hat s_F = 1.825 \cdot 10^{-2}$ & $\hat\theta_{\hat\Xi,F} = 1.188 \cdot 10^{2}$ & $\hat\beta_{\hat\Xi,F} = 1.074 \cdot 10^{1}$ \\ 
 &  & $\hat m_V = -9.322 \cdot 10^{-4}$ & $\hat s_V = 4.232 \cdot 10^{-2}$ & $\hat \theta_{\hat\Xi,V} = 2.470 \cdot 10^{1}$ & $\hat \beta_{\hat\Xi,V} = -3.067$ \\
\cline{2-6}
 & 3 & $\hat m_F = 2.981 \cdot 10^{-3}$ & $\hat s_F = 2.236 \cdot 10^{-2}$ & $\hat\theta_{\hat\Xi,F} = 2.637 \cdot 10^{2}$ & $\hat\beta_{\hat\Xi,F} = 1.947 \cdot 10^{1}$ \\ 
 &  & $\hat m_V = -1.398 \cdot 10^{-3}$ & $\hat s_V = 5.183 \cdot 10^{-2}$ & $\hat \theta_{\hat\Xi,V} = 4.775 \cdot 10^{1}$ & $\hat \beta_{\hat\Xi,V} = -4.841$ \\
\cline{2-6}
 & 4 & $\hat m_F = 3.974 \cdot 10^{-3}$ & $\hat s_F = 2.581 \cdot 10^{-2}$ & $\hat\theta_{\hat\Xi,F} = 4.643 \cdot 10^{2}$ & $\hat\beta_{\hat\Xi,F} = 2.969 \cdot 10^{1}$ \\ 
 &  & $\hat m_V = -1.864 \cdot 10^{-3}$ & $\hat s_V = 5.985 \cdot 10^{-2}$ & $\hat \theta_{\hat\Xi,V} = 7.623 \cdot 10^{1}$ & $\hat \beta_{\hat\Xi,V} = -6.694$ \\
\hline
GBP-USD
 & 1 & $\hat m_F = -4.535 \cdot 10^{-5}$ & $\hat s_F = 1.367 \cdot 10^{-2}$ & $\hat\theta_{\hat\Xi,F} = 1.707 \cdot 10^{1}$ & $\hat\beta_{\hat\Xi,F} = -1.097$ \\ 
FTSE
 &  & $\hat m_V = 5.886 \cdot 10^{-4}$ & $\hat s_V = 2.906 \cdot 10^{-2}$ & $\hat \theta_{\hat\Xi,V} = 8.747 \cdot 10^{-1}$ & $\hat \beta_{\hat\Xi,V} = -1.692 \cdot 10^{-1}$ \\
\cline{2-6}
 & 2 & $\hat m_F = -9.069 \cdot 10^{-5}$ & $\hat s_F = 1.933 \cdot 10^{-2}$ & $\hat\theta_{\hat\Xi,F} = 5.235 \cdot 10^{1}$ & $\hat\beta_{\hat\Xi,F} = -2.378$ \\ 
 &  & $\hat m_V = 1.177 \cdot 10^{-3}$ & $\hat s_V = 4.109 \cdot 10^{-2}$ & $\hat \theta_{\hat\Xi,V} = 3.910$ & $\hat \beta_{\hat\Xi,V} = -5.347 \cdot 10^{-1}$ \\
\cline{2-6}
 & 3 & $\hat m_F = -1.360 \cdot 10^{-4}$ & $\hat s_F = 2.368 \cdot 10^{-2}$ & $\hat\theta_{\hat\Xi,F} = 1.008 \cdot 10^{2}$ & $\hat\beta_{\hat\Xi,F} = -3.740$ \\ 
 &  & $\hat m_V = 1.766 \cdot 10^{-3}$ & $\hat s_V = 5.033 \cdot 10^{-2}$ & $\hat \theta_{\hat\Xi,V} = 9.389$ & $\hat \beta_{\hat\Xi,V} = -1.048$ \\
\cline{2-6}
 & 4 & $\hat m_F = -1.814 \cdot 10^{-4}$ & $\hat s_F = 2.734 \cdot 10^{-2}$ & $\hat\theta_{\hat\Xi,F} = 1.605 \cdot 10^{2}$ & $\hat\beta_{\hat\Xi,F} = -5.157$ \\ 
 &  & $\hat m_V = 2.354 \cdot 10^{-3}$ & $\hat s_V = 5.812 \cdot 10^{-2}$ & $\hat \theta_{\hat\Xi,V} = 1.748 \cdot 10^{1}$ & $\hat \beta_{\hat\Xi,V} = -1.690$ \\
\hline
EUR-USD
 & 1 & $\hat m_F = 1.339 \cdot 10^{-4}$ & $\hat s_F = 1.081 \cdot 10^{-2}$ & $\hat\theta_{\hat\Xi,F} = 4.800 \cdot 10^{1}$ & $\hat\beta_{\hat\Xi,F} = -6.043 \cdot 10^{-1}$ \\ 
DAX
 &  & $\hat m_V = 3.291 \cdot 10^{-4}$ & $\hat s_V = 3.279 \cdot 10^{-2}$ & $\hat \theta_{\hat\Xi,V} = 5.525 \cdot 10^{-1}$ & $\hat \beta_{\hat\Xi,V} = -8.136 \cdot 10^{-2}$ \\
\cline{2-6}
 & 2 & $\hat m_F = 2.678 \cdot 10^{-4}$ & $\hat s_F = 1.529 \cdot 10^{-2}$ & $\hat\theta_{\hat\Xi,F} = 1.379 \cdot 10^{2}$ & $\hat\beta_{\hat\Xi,F} = -1.228$ \\ 
 &  & $\hat m_V = 6.581 \cdot 10^{-4}$ & $\hat s_V = 4.637 \cdot 10^{-2}$ & $\hat \theta_{\hat\Xi,V} = 2.496$ & $\hat \beta_{\hat\Xi,V} = -2.599 \cdot 10^{-1}$ \\
\cline{2-6}
 & 3 & $\hat m_F = 4.017 \cdot 10^{-4}$ & $\hat s_F = 1.872 \cdot 10^{-2}$ & $\hat\theta_{\hat\Xi,F} = 2.557 \cdot 10^{2}$ & $\hat\beta_{\hat\Xi,F} = -1.859$ \\ 
 &  & $\hat m_V = 9.872 \cdot 10^{-4}$ & $\hat s_V = 5.679 \cdot 10^{-2}$ & $\hat \theta_{\hat\Xi,V} = 6.031$ & $\hat \beta_{\hat\Xi,V} = -5.127 \cdot 10^{-1}$ \\
\cline{2-6}
 & 4 & $\hat m_F = 5.356 \cdot 10^{-4}$ & $\hat s_F = 2.162 \cdot 10^{-2}$ & $\hat\theta_{\hat\Xi,F} = 3.963 \cdot 10^{2}$ & $\hat\beta_{\hat\Xi,F} = -2.495$ \\ 
 &  & $\hat m_V = 1.316 \cdot 10^{-3}$ & $\hat s_V = 6.557 \cdot 10^{-2}$ & $\hat \theta_{\hat\Xi,V} = 1.128 \cdot 10^{1}$ & $\hat \beta_{\hat\Xi,V} = -8.302 \cdot 10^{-1}$ \\
\hline
KRW-USD
 & 1 & $\hat m_F = 2.992 \cdot 10^{-4}$ & $\hat s_F = 1.307 \cdot 10^{-2}$ & $\hat\theta_{\hat\Xi,F} = 1.570 \cdot 10^{2}$ & $\hat\beta_{\hat\Xi,F} = 6.232$ \\ 
KS200
 &  & $\hat m_V = 8.016 \cdot 10^{-5}$ & $\hat s_V = 3.460 \cdot 10^{-2}$ & $\hat \theta_{\hat\Xi,V} = 5.286 \cdot 10^{-2}$ & $\hat \beta_{\hat\Xi,V} = -4.584 \cdot 10^{-2}$ \\
\cline{2-6}
 & 2 & $\hat m_F = 5.985 \cdot 10^{-4}$ & $\hat s_F = 1.848 \cdot 10^{-2}$ & $\hat\theta_{\hat\Xi,F} = 5.069 \cdot 10^{2}$ & $\hat\beta_{\hat\Xi,F} = 1.423 \cdot 10^{1}$ \\ 
 &  & $\hat m_V = 1.603 \cdot 10^{-4}$ & $\hat s_V = 4.893 \cdot 10^{-2}$ & $\hat \theta_{\hat\Xi,V} = 1.531 \cdot 10^{-1}$ & $\hat \beta_{\hat\Xi,V} = -9.388 \cdot 10^{-2}$ \\
\cline{2-6}
 & 3 & $\hat m_F = 8.977 \cdot 10^{-4}$ & $\hat s_F = 2.263 \cdot 10^{-2}$ & $\hat\theta_{\hat\Xi,F} = 1.006 \cdot 10^{3}$ & $\hat\beta_{\hat\Xi,F} = 2.305 \cdot 10^{1}$ \\ 
 &  & $\hat m_V = 2.405 \cdot 10^{-4}$ & $\hat s_V = 5.992 \cdot 10^{-2}$ & $\hat \theta_{\hat\Xi,V} = 2.851 \cdot 10^{-1}$ & $\hat \beta_{\hat\Xi,V} = -1.428 \cdot 10^{-1}$ \\
\cline{2-6}
 & 4 & $\hat m_F = 1.197 \cdot 10^{-3}$ & $\hat s_F = 2.613 \cdot 10^{-2}$ & $\hat\theta_{\hat\Xi,F} = 1.636 \cdot 10^{3}$ & $\hat\beta_{\hat\Xi,F} = 3.247 \cdot 10^{1}$ \\ 
 &  & $\hat m_V = 3.207 \cdot 10^{-4}$ & $\hat s_V = 6.919 \cdot 10^{-2}$ & $\hat \theta_{\hat\Xi,V} = 4.433 \cdot 10^{-1}$ & $\hat \beta_{\hat\Xi,V} = -1.922 \cdot 10^{-1}$ \\
\hline
\end{tabular}
\end{footnotesize}
\caption{\label{Table:RNParameterSet}Risk-neural parameters minimize the distance from the historically estimated physical parameters for the 4 pairs of examples, respectively.}
\end{sidewaystable}

\begin{figure}
\centering
\includegraphics[width = 7cm]{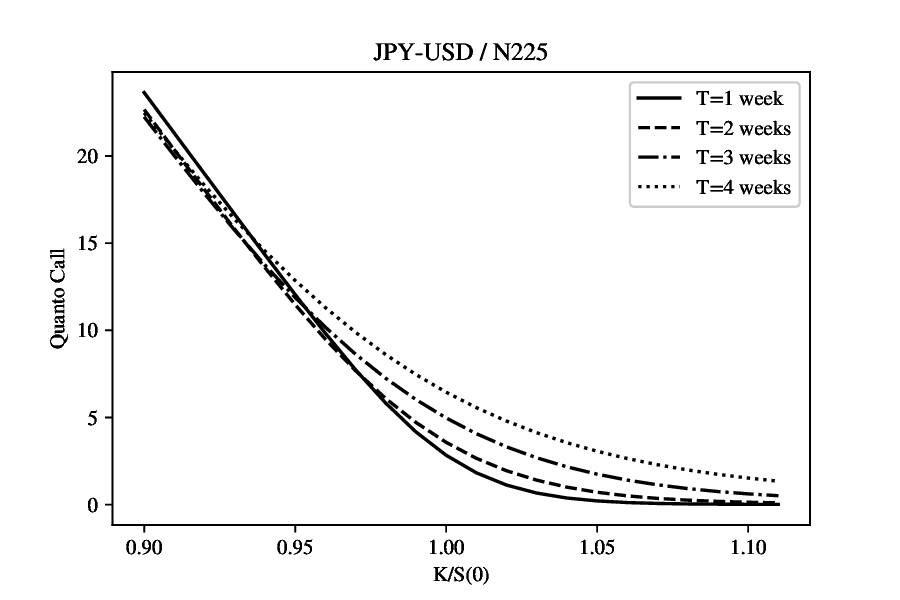}
\includegraphics[width = 7cm]{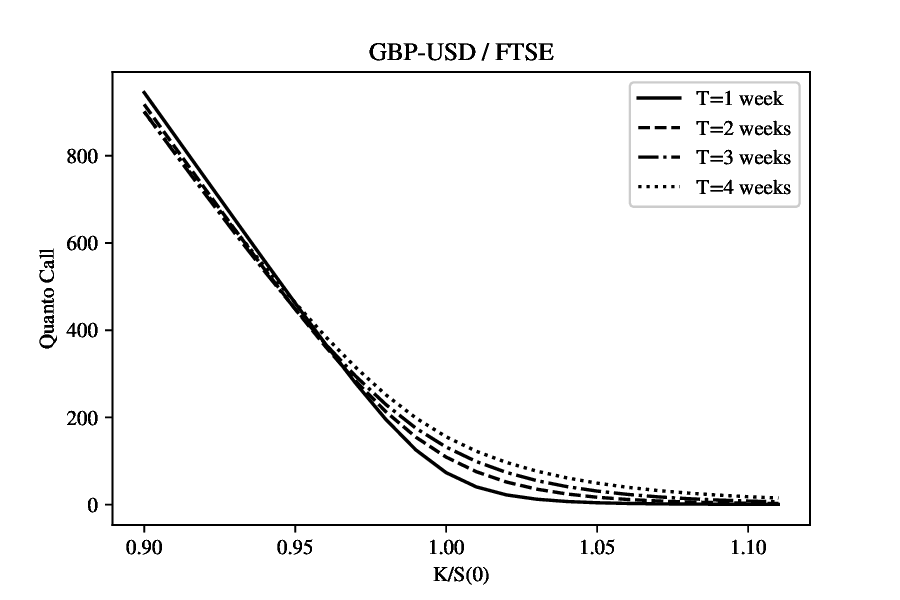}\\
\includegraphics[width = 7cm]{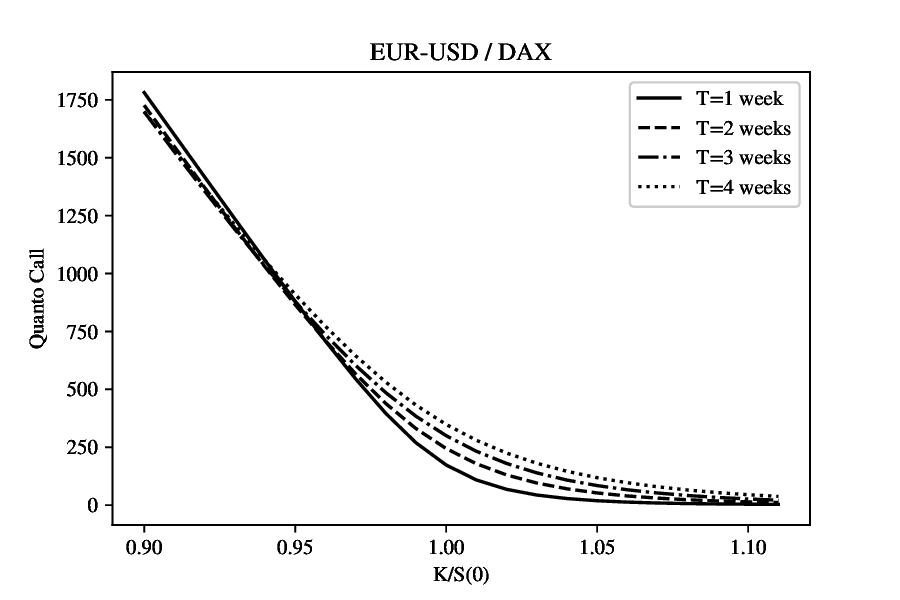}
\includegraphics[width = 7cm]{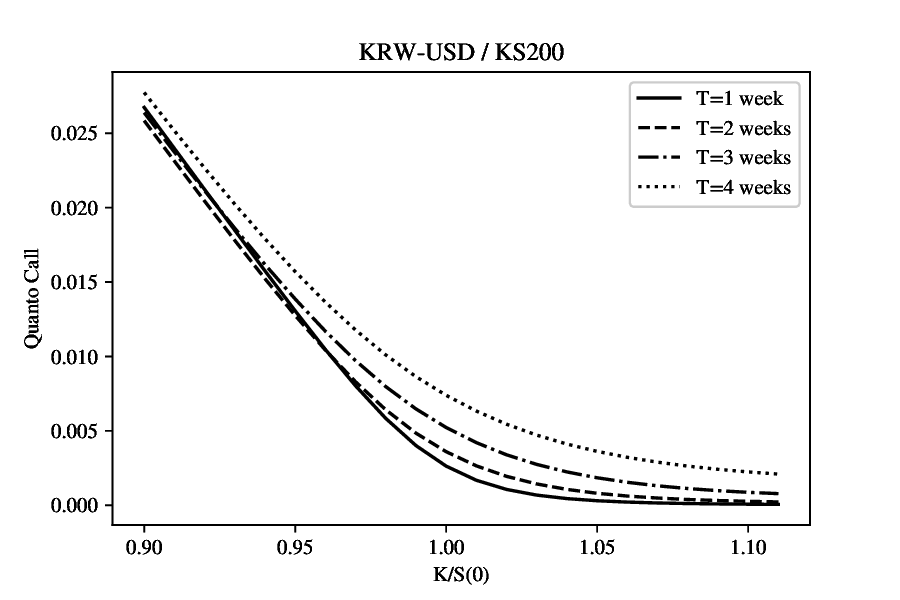}
\\
\caption{\label{Figure:QuantoCallPrices}Quanto Call Prices.  The top-left is for JPY-USD/N225, the top-right is for GBP-USD/FTSE, the bottom-left is for EUR-USD/DAX and the bottom-right is for KRW-USD/KS200 returns.}
\end{figure}

\section{Conclusion}
We have discussed a method for pricing European quanto options based on the \gNTS~model. 
The \gNTS~process captures both the fat-tail property and asymmetric dependence between returns of an FX rate and a corresponding foreign index.
Different from the NTS process, the \gNTS~process allows different subordinators to the foreign index and FX return distributions, respectively,
and it describes different volatility characteristics for the FX rate and foreign index that the NTS model does not capture. 

Since the \gNTS~ does not have a simple analytic form of distribution,
we use the CRealNVP model to find the PDF of the \gNTS~process. 
In this study, we construct the CRealNVP model for the \gStdNTS~process and train the model using the training set generated by the Monte-Carlo simulation of the \gStdNTS~process.
The \gNTS~process can be decomposed by the mean, standard deviation, and \gStdNTS~process.
We empirically fit the \gStdNTS~process parameters to the 4 pairs of FX rate and foreign index data: USD-JPY/N225, USD-GBP/FTSE, USD-EUR/DAX, and UDS-KRW/KS200.
According to the K-S test in this investigation, the parameter estimation for \gNTS~model performs better than that for the NTS model, which is the benchmark model.

Applying Sato's theorem and Girsanov's theorem to the time-changed Brownian motion model, a risk-neutral measure of the \gNTS~model is obtained.
Since there are infinitely many risk-neutral measures in \gNTS~model,
we select one risk-neutral measure whose parameter set has the smallest distance from the parameter set of the physical measure.
This method was applied to the 4 example pairs of the empirical data, and the risk-neutral parameters are obtained for each case.
Using the risk-neutral parameters, we successfully calculate prices of the example quanto option for the 4 pairs of FX rates and foreign market indexes.

We conclude that the distribution of \gNTS~process is successfully obtained by the CRealNVP model.
Using this method, we can fit the \gNTS~process to the empirical data efficiently.
Moreover, we can find the risk-neutral measure of the \gNTS~model and calculate the price of the quanto option using the CRealNVP model with the risk-neutral parameters.

\section{Appendix}
\begin{proof}[Proof of Proposition \ref{pro:gNTS normalization in time}] Let
\[
Y\sim\gNTS_N(\alpha, \theta_Y, \beta_Y, \mu_Y, \sigma_Y, R),
\]
where $n$-th elements of $\theta_Y$, $\beta_Y$, $\mu_Y$, and $\sigma_Y$ are 
$\theta_{Y,n}=\theta_n T^{\frac{2}{\alpha_n}}$, 
~~~
$\beta_{Y,n}=\beta_n T^{\frac{2}{\alpha_n}}$,
~~~
$\mu_{Y,n} = \mu_n T$, and $\sigma_{Y,n}= \sigma_n T^{\frac{1}{\alpha_n}}$.
Then we have $X(T) \disteq Y(1)$.
Moreover, by the Proposition \ref{prop:stdgnts to gnts}, we have $Y(1)\disteq m+\diag(s) \Xi(1)$
with $\Xi\sim \gStdNTS_N(\alpha,\theta_\Xi,\beta_\Xi,R)$, 
where $\theta_\Xi=\theta_Y$ and the $n$-th elements of $m\in\R^N$, $s\in\R_+^N$ and $\beta_\Xi\in\R^N$ are
\begin{align*}
m_n &=\mu_n T+\frac{\alpha_n\beta_n T^{\frac{2}{\alpha_n}}}{2}\left(\theta_n T^{\frac{2}{\alpha_n}}\right)^{\frac{\alpha_n}{2}-1}
=T\left(\mu_n+\frac{\alpha_n\beta_n}{2}\theta_n^{\frac{\alpha_n}{2}-1}\right)
,\\
%\end{align*}
%
%\begin{align*}
s_n &= \sqrt{\frac{\alpha_n}{2}\left(\theta_nT^{\frac{2}{\alpha_n}}\right)^{\frac{\alpha_n}{2}-1}\left(\left(\frac{2-\alpha_n}{2\theta_nT^{\frac{2}{\alpha_n}}}\right)\left(\beta_nT^{\frac{2}{\alpha_n}}\right)^2+\left(\sigma_nT^{\frac{1}{\alpha_n}}\right)^2\right)}\\
&
= \sqrt{\frac{\alpha_n}{2}\theta_n^{\frac{\alpha_n}{2}-1}T\left(\left(\frac{2-\alpha_n}{2\theta_n}\right)\beta_n^2+\sigma_n^2\right)}
\end{align*}
and
\begin{align*}
\beta_{\Xi,n}=\beta_n T^{\frac{2}{\alpha}}/s_n,
\end{align*}
respectively. Here, $\alpha_n$, $\theta_n$, $\beta_n$, $\mu_n$, and $\sigma_n$ are the $n$-th elements of $\alpha$, $\theta$, $\beta$, $\mu$, and $\sigma$, respectively.
\end{proof}

\begin{proof}[Proof of Proposition \ref{pro:GirsanovGNTS}]
Let $\hat \beta=(\hat \beta_1,\hat \beta_2,\cdots,\hat \beta_N)^\tr \in \R^N$ and  $(H(t))_{t\ge0}$ be an $N$-dimensional process satisfying
\[
\textup{diag}(\sigma) R^{\frac{1}{2}} H(t)=\diag \left(\tau(t)^\sqt\right)(\beta-\hat \beta)
\]
with $H(t) = \left(H_1(t), H_2(t), \cdots, H_N(t)\right)^\tr$. 
Then we have
\begin{align*}
X(t) &= \mu t + \diag\left(\hat \beta\right) \int_0^t\tau(u)du \\
       & +  \,\, \textup{diag}(\sigma)  \left(\int_0^t \diag \left(\tau^\sqt(t)\right) R^{\frac{1}{2}} H(u)du+\int_0^t \diag \left(\tau^\sqt(t)\right)d(R^{\frac{1}{2}}B(u))\right)\\
      &= \mu t + \diag\left(\hat \beta\right) \int_0^t\tau(u)du  + \textup{diag}(\sigma)  \int_0^t \diag \left(\tau^\sqt(t)\right) R^{\frac{1}{2}} \left(H(u)du+dB(u)\right).
\end{align*}
With
\begin{equation}\label{eq:RNDer}
\frac{d\Q_{\hat\beta}}{d\P}=e^{\Xi(T)-\frac{1}{2}[\Xi,\Xi](T)},
~~~\text{ for }~~
\Xi(t)=-\sum_{n=1}^N\int_0^tH_n(s)dB_n(s),
\end{equation}
by Girsanov's theorem (cf.\ Theorem 10.8, \cite{Klebaner:2005}), process
\[
  W(t) = B(t)+\int_0^tH(u)du,
\]
is a $\Q_{\hat\beta}$-Brownian motion, and we have
\begin{align*}
X(t) &= \mu t + \diag\left(\hat \beta\right)\int_0^t \tau(u) du +\textup{diag}(\sigma) \int_0^t\diag\left(\tau^\sqt(u)\right)R^{\frac{1}{2}} dW(u) 
\end{align*}
As the following proposition states, $X \sim \textup{gNTS}_N(\alpha, \theta,\hat \beta, \mu, \sigma, R)$ is,  therefore, an NTS--process under measure $\Q_{\hat\beta}$. 

Let $\hat\theta=(\hat\theta_1,\hat\theta_2,\cdots, \hat\theta_N)^\tr \in \R_+$.
Using Proposition \ref{pro:Change of Measure Tau}, there is a measure $\Q_{\hat\theta_1,\hat\beta}$ equivalent to $\Q_{\hat\beta}$ under which $\Tau_1\sim\subTS(\alpha_1, 1, \hat\theta_1)$. Moreover, there is a measure $\Q_{\hat\theta_n, \hat\beta}$ equivalent to $\Q_{\hat\theta_{n-1},\hat\beta}$  under which $\Tau_n\sim\subTS(\alpha_n, 1, \hat\theta_n)$ for $n\in\{2,3,\cdots, N\}$. Therefore, $X \sim \textup{gNTS}_N(\alpha, \hat\theta, \hat\beta, \mu, \sigma, R)$ under the measure $\Q_{\hat\theta, \hat\beta}=\Q_{\hat\theta_{N},\hat\beta}$ . 
\end{proof}

\singlespacing
\bibliographystyle{decsci_mod}
\bibliography{refs_aaron_Quanto_2}

\end{document}